\newtheorem{theorem}{Theorem}
\newtheorem{definition}{Definition}
\newtheorem{lemma}{Lemma}
\newtheorem{remark}{Remark}
\newtheorem{example}{Example}
\newtheorem{assumption}{Assumption}
\newtheorem{proof}{Proof}
\newtheorem{proof of Theorem 1}{Proof of Theorem 1}
\newcommand*{\QEDA}{\hfill\ensuremath{\blacksquare}}
\begin{document}
%

\title{Distributed algorithms for solving convex inequalities}
%
%
%

\author{Kaihong~Lu,~
       Gangshan~Jing,~
       and~Long~Wang~
\thanks{This work was supported by National Science Foundation of China (Grant Nos. 61533001 61375120, and 61603288). (\emph{Corresponding author: Long Wang})}
\thanks{K. Lu and G. Jing are with the Center for Complex Systems, School of Mechano-electronic Engineering, Xidian University, Xi'an 710071, China
 (e-mail:khong\_lu@163.com; nameisjing@gmail.com)
 }
\thanks{L. Wang is with the Center for Systems and Control, College of Engineering, Peking
University, Beijing 100871, China
 (e-mail: longwang@pku.edu.cn)}
}

\maketitle

\begin{abstract}
In this paper, a distributed subgradient-based algorithm is proposed for continuous-time multi-agent systems to search a feasible solution to convex inequalities. The algorithm involves each agent achieving a state constrained by its own inequalities while exchanging local information with other agents under a time-varying directed communication graph. With the validity of a mild connectivity condition associated with the communication graph, it is shown that all agents will reach agreement asymptotically and the consensus state is in the solution set of the inequalities. Furthermore, the method is also extended to solving the distributed optimization problem of minimizing the sum of local objective functions subject to convex inequalities. A simulation example is presented to demonstrate the effectiveness of the theoretical results.
\end{abstract}

\begin{keywords}
Multi-agent systems; Convex inequalities; Consensus; Distributed optimization.
\end{keywords}

%
\IEEEpeerreviewmaketitle

\section{Introduction}

\IEEEPARstart{D}{istributed} coordination problems of multi-agent systems (MASs) have been intensively investigated in various areas including engineering, natural science, and social science \cite{RenA25}-\cite{MaJ28}. As a fundamental coordination problem, the consensus problem which requires that a group of autonomous agents achieve a common state has attracted much attention, see \cite{Jing26}-\cite{Y. Zheng04}. This is due to its wide applications in distributed control and estimation \cite{S. Kar05}, distributed signal processing \cite{Dimakis06}, and distributed optimization \cite{A. Nedi07}-\cite{Gharesifard12}.

Consensus-based algorithms have been effectively used for solving linear algebraic equations \cite{S. Mou13}-\cite{G. Shi18}. The natural idea for solving large-scale linear algebraic equations is to decompose them into smaller ones which can then be solved by a multi-agent network \cite{S. Mou13}. By using orthogonal projection theory, the problem of solving linear equations has been converted to a consensus problem of MASs in the literature. In \cite{S. Mou15}, linear equations with a unique solution were solved by multiple agents under a fixed undirected communication graph. In \cite{S. Mou16}, linear equations with multiple solutions were further investigated under a time-varying directed communication graph. However, there is a limitation that the algorithms in \cite{S. Mou15, S. Mou16} require the initial value of each agent's state to satisfy its equation constraints. In order to overcome this problem, the ``consensus + projection" and distributed projected consensus algorithms were proposed to solve linear equations by \cite{G. Shi18}, where they project each agent's state into the affine subspace specified by its own equation constraints, then solving the equation is equivalent to finding a point in the intersection of all the affine subspaces.

Similar to solving linear equations, searching feasible solutions to a set of algebraic inequalities is also a significant problem that remains to be dealt with. Some simple inequalities could be solved for trivial solutions by transforming them to equations. However, for complex and large-scale inequalities, transforming them to special equations requires a vast amount of computations and may cause the equations having no solution. In fact, multi-agent systems are often subjected to state constraints. For instance, in formation control, containment control and alignment problems, each agent's position is usually limited to stay in a certain region. In this note, we consider the constraints as convex inequalities. Inspired by the distributed methods for solving linear equations \cite{S. Mou13, S. Mou16, G. Shi18}, we solve the convex inequalites in a distributed manner. Different from the investigations \cite{S. Mou15}-\cite{H. Cao17} associated with solving linear equations, in problems of solving convex inequalities, the restriction of agents' initial states leads to the reduction of the feasible region. Moreover, the solution space of convex inequalities is not an affine subspace, which implies that the method in \cite{G. Shi18} is not applicable.

{Recently, some significant results on distributed algorithms combining consensus and subgradient algorithms were published. In \cite{A. Nedi07, A. Nedi11}, the ``consensus + subgradient" algorithm was used to minimize a sum of convex functions via an agent network. In \cite{Nedi08}, a distributed projected subgradient algorithm was proposed to solve the constrained optimization problem, where each agent should keep a state lying in its own convex set.
This algorithm with time delays was studied in \cite{Lin10}.}

{Investigations in \cite{A. Nedi07}-\cite{Lin10} are all conducted for discrete-time MASs. Nevertheless, agents are often modeled by continuous-time dynamic systems in practical applications of motion coordination control. For example, in rendezvous problems, multiple vehicles that are required to reach a desired common location usually have continuous-time dynamics \cite{McLain13}. Moreover, the results on discrete-time distributed algorithms can not be directly applied to the continuous-time cases. In fact, some distributed gradient algorithms have been proposed for continuous-time MASs under fixed graphs \cite{Tang11}-\cite{Gharesifard12}. Different from them, we investigate the distributed subgradient-based algorithm for continuous-time MASs in the scenarios that the graph is time-varying.}

In this note, we present a distributed subgradient-based algorithm to search a feasible solution to convex inequalities via a network of continuous-time agents, which enables all agents' states to approximate to a common point in the solution set of inequalities. By implementing the algorithm, each agent adjusts its state value based on its own inequality information and the local information received from its immediate neighbors. The underlying communication graph is modeled as a time-varying directed graph. We show that if the $\delta-$graph, induced by the time-varying directed graph, is strongly connected, all agents' states will reach a common point asymptotically and the point is a feasible solution to convex inequalities. Moreover, this method will be extended to solving the distributed optimization problem of minimizing the sum of local objective functions subject to convex inequalities. Numerical simulations are provided to demonstrate the effectiveness of our theoretical results.

This note is organized as follows. In Section \ref{se1}, we formulate the problem to be studied and present the distributed algorithm for continuous-time multi-agent systems. In Section \ref{se2}, we state our main result and give its proof in detail. In Section \ref{se3}, we extend our method to solving the distributed optimization problem of minimizing the sum of local objective functions subject to convex inequalities. In Section \ref{se4}, Simulation examples are presented. Section \ref{se5} concludes the whole paper.

{\bf Notation:} Throughout this note, we use $|x|$ to represent the absolute value of scalar $x$. The operator $\lfloor x\rfloor$ is used to denote the largest integer not larger than the value of $x$. $\mathbb{R}$ and $\mathbb{N}$ denote the set of real number and the set of positive  integer,respectively. Let $\mathbb{R}^{m}$ be the $m$-dimensional real vector space. For a given vector $\textbf{x}\in\mathbb{R}^{m}$, $\textbf{x}\leq0$ implies that each entry of vector $\textbf{x}$ is not greater that zero. $\|\textbf{x}\|$ denotes the standard Euclidean norm, i.e., $\|\textbf{x}\|=\sqrt{\textbf{x}^T\textbf{x}}$. And $\|\textbf{x}\|_1$  is used to denote the 1-norm, i.e., $\|\textbf{x}\|_1=\sum\limits_{j=1}^{m}|\textbf{x}_i|$, where $\textbf{x}_i$ represents the $i^{th}$ entry of vector $\textbf{x}$. For any two vectors $\textbf{u}$ and $\textbf{v}$, the operator $\langle\textbf{u}, \textbf{v}\rangle$ denotes the inner product of $\textbf{u}$ and $\textbf{v}$. $\textbf{1}\in \mathbb{R}^m$ denotes the $m$-dimensional vector with elements being all ones. For a matrix $\textbf{A}$, $[\textbf{A}]_{ij}$ denotes the matrix entry in the $i^{th}$ row and $j^{th}$ column, $[\textbf{A}]_{i\cdot}$ represents the $i^{th}$ row of the matrix $\textbf{A}$, and $[\textbf{A}]_{\cdot j}$ represents the $j^{th}$ column of the matrix $\textbf{A}$. For set $\mathcal{K}\subset\mathbb{R}^n$, we use $P_\mathcal{K}[\cdot]$ to denote a projection operator given by $P_\mathcal{K}(\textbf{u})=\arg\min_{\textbf{v}\in \mathcal{K}}\|\textbf{u}-\textbf{v}\|$.

\section{Problem formulation}\label{se1}

\subsection{Basic graph theory}

The time-varying directed communication topology is denoted by $\mathcal{G}(t)=(\mathcal{V},\mathcal{E}(t), \mathcal{A}(t))$. $\mathcal{V}$ is a set of vertex, $\mathcal{E}(t)\subset\mathcal{V}\times\mathcal{V}$ is an edge set, and the weighted matrix $\mathcal{A}(t)=(a_{ij}(t))_{n\times n}$ is a non-negative matrix for adjacency weights of edges such that $a_{ij}(t)>0\Leftrightarrow (j,i)\in \mathcal{E}(t)$ and $a_{ij}(t)=0$ otherwise. Denote $\mathcal{N}_{i}(t)=\{j\in\mathcal{V}|(j,i)\in\mathcal{E}(t)\}$ to represent the neighbor set at time $t$. The communication graph $\mathcal{G}(t)$ is said to be balanced if the sum of the interaction weights from and to an agent $i$ are equal, i.e., $ \sum\limits_{j=1}^{n}a_{ij}(t)=\sum\limits_{j=1}^{n}a_{ji}(t)$.
$(j,i)$ is called a $\delta-$edge if there always exist two positive constants $T$ and $\delta$ such that $\int_{t}^{t+T}a_{ij}(s)ds\geq\delta$ for any $t\geq0$. A $\delta-$graph, corresponding to $\mathcal{G}(t)$, is defined as $\mathcal{G}_{(\delta, T)}=(\mathcal{V},\mathcal{E}_{(\delta, T)})$, where $\mathcal{E}_{(\delta, T)}=\left\{(j,i)\in\mathcal{V}\times\mathcal{V}|\int_{t}^{t+T}a_{ij}(s)ds\geq\delta~for~any~t\geq0\right\}$. For a fixed topology $\mathcal{G}_{(\delta, T)}$, a path of length $r$ from node $i_1$ to node $i_{r+1}$ is a sequence of $r + 1$ distinct nodes $i_1 \cdots, i_{r+1}$ such that $(i_q, i_{q+1}) \in\mathcal{E}_{(\delta, T)}$ for $q=1,\cdots, r$. If there exists a path between any two nodes, then $\mathcal{G}_{(\delta, T)}$ is said to be strongly connected.

Here we make the following assumptions for the communication graph.
\begin{assumption}
The communication graph $\mathcal{G}(t)$ is balanced.
\end{assumption}
\begin{assumption}
The $\delta-$digraph $\mathcal{G}_{(\delta, T)}$ is strongly connected.
\end{assumption}

\subsection{Convex inequalities}
The objective of this note is to distributively search a feasible solution to the following inequalities:
\begin{equation}\label{eq1}
\textbf{g}(\textbf{x})\leq0
\end{equation}
where $\textbf{x}\in\mathbb{R}^{m}$ and $\textbf{g}(\cdot)=[g_{1}(\cdot), \cdots, g_{n}(\cdot)]^T$, each $g_{i}(\cdot):\mathbb{R}^{m}\rightarrow\mathbb{R}$ is a convex function which is only available to agent $i$. The following assumption is adopted throughout the paper.
\begin{assumption}
The feasible solution set of inequalities (\ref{eq1}) is non-empty.
\end{assumption}
Under Assumption 3, it is possible to search a point in $\textbf{X}=\left\{\textbf{x}|\textbf{g}(\textbf{x})\leq0\right\}$ over a network of agents. Now we introduce a plus function $g_{i}^{+}(\textbf{x})=\max [g_{i}(\textbf{x}), 0]$, $i=1,\cdots,n$. Note that if there exists a vector $\textbf{x}^{*}\in
\mathbb{R}^m$ such that $g_{i}^{+}(\textbf{x}^{*})=0$ for each $i\in\{1,\cdots,n\}$, then $\textbf{x}^{*}$ is a feasible solution to convex inequalities (\ref{eq1}).
Since functions $\max(\cdot)$ and $g_{i}(\cdot)$ are convex, function $g_{i}^{+}(\textbf{x})$ is also convex. Therefore, the subgradient of function $g_{i}^{+}(\textbf{x})$, denoted by $\nabla g_{i}^{+}(\textbf{x})$, always exists, and the following holds,
\begin{equation}\label{eq2}
 g_{i}^{+}(\textbf{y})-g_{i}^{+}(\textbf{x})\geq\langle\nabla g_{i}^{+}(\textbf{x}),\textbf{y}-\textbf{x}\rangle
\end{equation}
for any $\textbf{y}\in \mathbb{R}^{m}$.

Similar to \cite{A. Nedi07, A. Nedi11, Lin10}, we give the following assumption on the boundedness of $\nabla g_{i}^{+}(\textbf{x})$.
\begin{assumption}
  $\|\nabla g_{i}^{+}(\textbf{x})\|\leq K$ for some $K>0$, $i=1,\cdots,n$.
\end{assumption}

\subsection{Multi-agent systems for searching feasible solutions}
Now consider a continuous-time multi-agent system consisting of $n$ agents, labeled by set $\mathcal{V}=\{1,\cdots, n\}$. Each agent's dynamics is described as
\begin{equation}\label{eq3}
\dot{\textbf{x}_{i}}(t)=\textbf{u}_i(t),~~~~~i\in \mathcal{V}
\end{equation}
where $\textbf{x}_i(t), \textbf{u}_i(t)\in\mathbb{R}^{m}$ respectively represent the state and input of agent $i$. For convex inequalities (\ref{eq1}), the following subgradient-based algorithm is considered.
\begin{equation}\label{eq4}
\textbf{u}_i(t)=\sum\limits_{j\in\mathcal{N}_{i}(t)}a_{ij}(t)(\textbf{x}_j(t)-\textbf{x}_i(t))-b(t)\nabla g_{i}^{+}(\textbf{x}_i(t)),~~~~~i\in \mathcal{V}
\end{equation}
where $b(t)>0$ is a non-increasing function such that $\int_{0}^\infty b(t)dt\rightarrow\infty$ and $\int_{0}^\infty b^2(t)dt<\infty$.

From (\ref{eq3}) and (\ref{eq4}), the control input of agent $i$ is based on the subgradient information of the local plus function $g_i^+$ and the information received from its neighbors. Therefore, algorithm (\ref{eq4}) is distributed. Note that the positiveness of $b(t)$ and the boundedness of $\int_{0}^\infty b^2(t)dt$ imply that $\lim\limits_{t\rightarrow\infty}b(t)=0$. If we set $b(t)=0$ for any $t>0$, algorithm (\ref{eq4}) reduces to a standard ``consensus" or ``agreement" algorithm for continuous-time MASs in \cite{Martin20, S. Martin23}. {The conditions for $b(t)$ are actually constraints on its decaying rate, which guarantees convergence of the algorithm. This idea is inspired by the subgradient method \cite{Boyd14}. In particular, a suitable choice of $b(t)$ is $b(t)=\frac{a_0}{t+b_0}$ for any $t\geq0$, where $a_0$ and $b_0$ are two positive constants.}

In \cite{A. Nedi07, Lin10}, distributed subgradient-based algorithms were designed for discrete-time multi-agent systems to optimize a sum of convex objective functions. In this note, the agents are considered to have continuous-time dynamics. We aim to obtain conditions that not only guarantee consensus among all agents, but also ensure that the common state is a solution to the inequalities. The definition of consensus is stated as follows.
\begin{definition}
MAS (\ref{eq3}) is said to reach consensus asymptotically if $\lim_{t\rightarrow\infty}||x_i-x^*||=0$ for any $i\in\mathcal{V}$. $x^*$ is called the consensus state.
\end{definition}

\section{Main result}\label{se2}

Let us start this section by stating the main result, which indicates that, MAS (\ref{eq3}) with (\ref{eq4}) reaches consensus asymptotically and the convex feasibility problem is solvable.
\begin{theorem}
Under Assumptions 1-4, MAS (\ref{eq3}) with (\ref{eq4}) reaches consensus asymptotically and the consensus state is a feasible solution to convex inequalities (\ref{eq1}).
\end{theorem}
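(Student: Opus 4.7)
The plan is a Lyapunov analysis anchored at an arbitrary feasible point. Pick any $\textbf{x}^\ast\in\textbf{X}$ (nonempty by Assumption~3) and set $V(t)=\sum_{i=1}^{n}\|\textbf{x}_i(t)-\textbf{x}^\ast\|^2$. Differentiating along (\ref{eq3})--(\ref{eq4}) splits $\dot V$ into two pieces: the consensus piece, which after invoking Assumption~1 reduces to $-\sum_{i,j}a_{ij}(t)\|\textbf{x}_i(t)-\textbf{x}_j(t)\|^2\leq 0$ (for a balanced graph the symmetric part of the Laplacian is positive semidefinite, and its quadratic form on the centred state is exactly this weighted edge-length sum); and the subgradient piece $-2b(t)\sum_i\langle\nabla g_i^+(\textbf{x}_i),\textbf{x}_i-\textbf{x}^\ast\rangle$, which is bounded above by $-2b(t)\sum_i g_i^+(\textbf{x}_i)\leq 0$ via the subgradient inequality (\ref{eq2}) combined with $g_i^+(\textbf{x}^\ast)=0$. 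Hence $\dot V(t)\leq 0$, trajectories are bounded, $V(t)$ has a finite limit, and integrating over $[0,\infty)$ yields two crucial summability statements: $\int_0^\infty b(s)\sum_i g_i^+(\textbf{x}_i(s))\,ds<\infty$ and $\int_0^\infty\sum_{i,j}a_{ij}(s)\|\textbf{x}_i(s)-\textbf{x}_j(s)\|^2\,ds<\infty$.

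Second, I would establish consensus $\|\textbf{x}_i(t)-\textbf{x}_j(t)\|\to 0$. Let $\bar{\textbf{x}}(t)=\frac{1}{n}\sum_i \textbf{x}_i(t)$; Assumption~1 gives $\dot{\bar{\textbf{x}}}(t)=-\frac{b(t)}{n}\sum_i\nabla g_i^+(\textbf{x}_i(t))$, so the disagreement variables $\textbf{y}_i(t)=\textbf{x}_i(t)-\bar{\textbf{x}}(t)$ obey a consensus ODE driven by an additive input of norm at most $2Kb(t)$ (Assumption~4). Using the disagreement Lyapunov $W(t)=\sum_i\|\textbf{y}_i(t)\|^2$, I would invoke Assumption~2 to show that on every time window of length $T$ the strong connectivity of the $\delta$-graph forces a quantitative contraction of $W$, while the contribution of the perturbation is $L^2$-summable via $\int_0^\infty b^2(s)\,ds<\infty$. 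A comparison-type estimate then drives $W(t)\to 0$.

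Third, I would upgrade consensus to actual trajectory convergence and prove feasibility of the limit. By boundedness the joint state admits a cluster point; by consensus every component of any such cluster point coincides, giving a single vector $\textbf{x}^\infty\in\mathbb{R}^m$ with $\textbf{x}_i(t_k)\to\textbf{x}^\infty$ for every $i$ along some subsequence $t_k\to\infty$. Because $\int_0^\infty b(s)\,ds=\infty$ together with the first summability bound forces $\liminf_{s\to\infty}\sum_i g_i^+(\textbf{x}_i(s))=0$, I can choose $t_k$ to realise this liminf; continuity of $g_i^+$ then yields $\sum_i g_i^+(\textbf{x}^\infty)=0$, i.e.~$g_i(\textbf{x}^\infty)\leq 0$ for every $i$, so $\textbf{x}^\infty\in\textbf{X}$. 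Finally, re-running the $V$-analysis with the anchor $\textbf{x}^\ast=\textbf{x}^\infty$ gives a non-increasing $V_{\textbf{x}^\infty}(t)$ whose limit must be $0$ (since $V_{\textbf{x}^\infty}(t_k)\to 0$ along the chosen subsequence), whence $\textbf{x}_i(t)\to\textbf{x}^\infty$ for every $i$.

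The main obstacle is the consensus step: turning strong connectivity of the $\delta$-graph, which is only an integrated, directional, balanced connectivity hypothesis, into a quantitative per-window contraction of $W(t)$ while simultaneously absorbing the square-integrable subgradient perturbation, is the most delicate piece of the argument and explains why the conditions $\int_0^\infty b(t)\,dt=\infty$ and $\int_0^\infty b^2(t)\,dt<\infty$ are precisely the right hypotheses on $b$.
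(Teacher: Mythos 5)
Your global architecture is genuinely different from the paper's, and in one respect cleaner. The paper never forms the node-level Lyapunov function $V(t)=\sum_i\|\textbf{x}_i(t)-\textbf{x}^\ast\|^2$; instead it works with the average state only, via $d(\bar{\textbf{x}}(t))=\tfrac12\|\bar{\textbf{x}}(t)-\textbf{x}_0\|^2$, and must therefore separately prove $\int_0^\infty b(t)\|\textbf{x}_i(t)-\bar{\textbf{x}}(t)\|\,dt<\infty$ using explicit transition-matrix estimates $\bigl|[\mathbf{\Phi}(t,s)]_{ij}-\tfrac1n\bigr|\le H\gamma^{t-s}$ and a double-integral computation (its inequalities (16)--(18)) before it can integrate $\dot d$. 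Your $\dot V\le -\sum_{i,j}a_{ij}(t)\|\textbf{x}_i-\textbf{x}_j\|^2-2b(t)\sum_i g_i^+(\textbf{x}_i)$ is correct (the balancedness of Assumption~1 is exactly what makes the Laplacian quadratic form equal the weighted edge-length sum, and the subgradient inequality with $g_i^+(\textbf{x}^\ast)=0$ handles the second term), and it delivers boundedness plus the summability $\int_0^\infty b\sum_i g_i^+(\textbf{x}_i)<\infty$ in one stroke, bypassing the paper's estimate (17) entirely. Your endgame --- extract $t_k$ realising $\liminf\sum_i g_i^+=0$, identify a common cluster point $\textbf{x}^\infty\in\textbf{X}$ via consensus and continuity, then re-anchor $V$ at $\textbf{x}^\infty$ and use monotonicity plus $V_{\textbf{x}^\infty}(t_k)\to0$ --- is exactly the paper's closing move with $d$ in place of $V$, and is sound.

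The gap is your second step, and it is not a small one: it is precisely where the paper spends all four of its lemmas. You assert that strong connectivity of the $\delta$-graph forces a quantitative contraction of $W(t)=\sum_i\|\textbf{y}_i\|^2$ ``on every time window of length $T$.'' That is false as stated: over a window of length $T$ each $\delta$-edge only accumulates weight $\delta$, which does not propagate information across the network; the paper's Lemma~2 shows the contraction factor $\lambda=1-(8n^2)^{-\lfloor n/2\rfloor}$ is only achieved over windows of length $(\lfloor 1/\delta\rfloor+1)\lfloor n/2\rfloor T$. More substantively, the known contraction result under integrated directional connectivity (Martin--Girard, the paper's Lemma~1) contracts the max--min \emph{spread}, not the quadratic disagreement; converting your $W$-based argument into a proof requires either redoing that analysis for the quadratic form or simply importing Lemmas~1--3 as the paper does --- at which point your step~2 works, but it is currently an assertion, not an argument. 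Finally, your attribution of this step to $\int_0^\infty b^2<\infty$ is off: absorbing the perturbation into an exponentially contracting disagreement dynamics only needs $b(t)\to0$ (which follows from $b$ being positive, non-increasing and square-integrable); in your architecture the square-integrability plays no other role, whereas in the paper's it is essential for (17)--(18). So: right skeleton, genuinely different and arguably leaner than the paper's, but the consensus contraction must actually be supplied, e.g.\ by citing the paper's Lemmas~1--3.
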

Now we define a time sequence $\left\{t_p\right\}_{p\in \mathbb{N}}$ with $t_0=0$. For any fixed $t_p$, $t_{p+1}$ is determined by another finite sequence $\left\{t_p^0, t_p^1, \cdots, t_p^{\lfloor n/2\rfloor}\right\}$, where $t_p^0=t_p$, $t_p^{\lfloor n/2\rfloor}=t_{p+1}$. Note that for any $h\in\left\{0, \cdots, \lfloor n/2\rfloor-1\right\}$, $t_p^{h+1}$ is the smallest time such that $t>t_p^{h}$ and
\begin{equation}\label{eq5}
\min_{S\subsetneqq\mathcal{V}, S\neq\mathbf{\varnothing}}\left\{\sum\limits_{i\in S}\sum\limits_{j\in S\setminus\mathcal{V}}\int_{t_p^{h}}^{t}a_{ij}(\tau)d\tau\right\}=1.
\end{equation}
Consider the following consensus model,
\begin{equation}\label{eq6}
 \dot{x_{i}}(t)=\sum\limits_{j\in\mathcal{N}_{i}(t)}a_{ij}(t)(x_j(t)-x_i(t)), ~~~~~ i\in \mathcal{V}
\end{equation}
where $x_{i}\in \mathbb{R}$. Let $\textbf{y}(t)=\left[x_{1}^T(t), \cdots, x_{n}^T(t)\right]^T$, system (\ref{eq6}) can be rewritten as
$$
 \dot{\textbf{y}}(t)=-\textbf{L}(t)\textbf{y}(t)
$$
where $\textbf{L}(t)=[\ell_{ij}(t)]\in \mathbb{R}^{n\times n}$ is the Laplacian matrix defined as $\ell_{ii}(t)=\sum\limits_{j=1}^{n}a_{ij}(t)$ and $\ell_{ij}(t)=-a_{ij}(t)$,$i\neq j$, see \cite{Godsil19} for detail. By the properties of linear systems \cite{Brockett22}, we have
 \begin{equation}\label{eq7}
\textbf{y}(t)=\mathbf{\Phi}(t,s)\textbf{y}(s)~~~~~ i\in \mathcal{V}
\end{equation}
where $\mathbf{\Phi}(t,s)$ is the state-transition matrix from state $\textbf{y}(s)$ to state $\textbf{y}(t)$ with $t\geq s\geq0$.
 Before giving the proof of our main result, some useful lemmas are needed. For consensus model (\ref{eq6}), the following lemma was proved by Martin and Girard \cite{S. Martin23}.
\begin{lemma}(Proposition 4 in \cite{S. Martin23})
Under Assumptions 3 and 4, let $P: \mathbb{R}^{+}\rightarrow\mathbb{N}$ be the function defined  by $P(t) = p$ if $t\in\left[t_{p}, t_{p+1}\right)$ for all $t>0$. If agent $i$ updates its state $x_{i}(t)$ with (\ref{eq6}), then
 \begin{equation}\label{eq8}\begin{split}
&\max_{1\leq i\leq n}x_{i}(t)-\min_{1\leq i\leq n}x_{i}(t)\leq \lambda^{P(t)-P(s)}
\left(\max_{1\leq i\leq n}x_{i}(s)-\min_{1\leq i\leq n}x_{i}(s)\right)\
\end{split}\end{equation}
for any $t\geq s\geq0$ and $x_{i}(s), i\in \mathcal{V}$, where $\lambda=1-\frac{1}{\left(8n^2\right)^{\lfloor n/2\rfloor}}$.
\end{lemma}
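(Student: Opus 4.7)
The plan is to establish, for each $p\geq 0$, the per-cycle contraction $V(t_{p+1})\leq\lambda V(t_p)$ where $V(t):=\max_i x_i(t)-\min_i x_i(t)$, and then iterate it over the $P(t)-P(s)$ complete cycles sandwiched inside $[s,t]$. Because the consensus model (\ref{eq6}) decouples across coordinates, it is enough to treat the scalar case $x_i(t)\in\mathbb{R}$; the vector bound then follows from applying the scalar bound to $\langle\textbf{y}(t),e_k\rangle$ for each canonical direction.

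First I would check the basic monotonicity: under (\ref{eq6}), $V(t)$ is non-increasing. Indeed, if $i^{*}(t)$ attains $\max_i x_i(t)$ then $\dot x_{i^{*}}=\sum_{j}a_{i^{*}j}(t)(x_j-x_{i^{*}})\leq 0$ since every summand is non-positive, so $\max_i x_i(\cdot)$ is non-increasing; symmetrically $\min_i x_i(\cdot)$ is non-decreasing. This global fact will be used repeatedly between the checkpoints $t_p^{h}$ and also to discard the two incomplete cycles at the endpoints $s$ and $t$ when telescoping.

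The core step is to show that, on each sub-interval $[t_p^{h},t_p^{h+1}]$, one additional agent is ``captured'' near the midpoint $r_p:=\tfrac12(\max_i x_i(t_p)+\min_i x_i(t_p))$ by an amount of order $V(t_p)/(8n^2)$. Let $S_h\subsetneq\mathcal{V}$ realize the minimum in (\ref{eq5}) at step $h$. By construction, the accumulated cross-weight from $\mathcal{V}\setminus S_h$ into $S_h$ over $[t_p^{h},t_p^{h+1}]$ is exactly $1$, so by pigeonhole there exist $i\in S_h$, $j\in\mathcal{V}\setminus S_h$ with $\int_{t_p^{h}}^{t_p^{h+1}}a_{ij}(\tau)d\tau\geq 1/n^2$. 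Writing $x_i$ via the integrated form of (\ref{eq6}) and using the first-step monotonicity to control the sign and magnitude of $x_j(\tau)-x_i(\tau)$ throughout the sub-interval, this accumulated weight forces $x_i(t_p^{h+1})$ to lie within $(1-1/(8n^2))V(t_p)$ of $r_p$. An induction on $h$ then shows that after $h+1$ sub-intervals, at least $h+1$ agents lie inside $(1-1/(8n^2))^{h}V(t_p)$ of $r_p$; at $h=\lfloor n/2\rfloor$ this forces either the top-most or the bottom-most agent to have collapsed, giving $V(t_{p+1})\leq\lambda V(t_p)$ with $\lambda=1-1/(8n^2)^{\lfloor n/2\rfloor}$.

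The hard part is the inductive capture step, because the partitions $S_h$ minimizing (\ref{eq5}) may change between sub-intervals, so one must argue that already-captured agents do not escape and that a \emph{new} agent is captured each time. Assumption 1 (balancedness) is the key tool here: balancedness prevents a captured agent from being pushed back to the extremes by net outflow across the current cut, while Assumption 2 (strong connectivity of $\mathcal G_{(\delta,T)}$) guarantees that every $t_p^{h+1}$ defined by (\ref{eq5}) is finite so the finite sequence $\{t_p^{h}\}_{h=0}^{\lfloor n/2\rfloor}$ exists. Once the per-cycle bound is in hand, the lemma follows by telescoping $P(t)-P(s)$ complete cycles and invoking monotonicity on $[s,t_{P(s)+1}]$ and $[t_{P(t)},t]$ to absorb the boundary portions.
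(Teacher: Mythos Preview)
The paper does not supply its own proof of this lemma: it is quoted verbatim as Proposition~4 of Martin--Girard~\cite{S. Martin23} and used as a black box. So there is nothing in the present paper to compare your argument against; you are reconstructing the proof of a cited external result.

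Your outline is broadly the Martin--Girard strategy (monotonicity of $V(t)=\max_i x_i-\min_i x_i$, a per-cycle contraction $V(t_{p+1})\le\lambda V(t_p)$ obtained by capturing one additional agent on each sub-interval $[t_p^{h},t_p^{h+1}]$, then telescoping), so in spirit you are on the right track. Two points, however, deserve attention. First, the lemma as stated here invokes Assumptions~3 and~4, which concern the inequality problem and are irrelevant to system~(\ref{eq6}); this is almost certainly a typo in the paper, and you should not read it as licensing Assumption~1. Second, and more substantively, your justification that ``balancedness prevents a captured agent from being pushed back to the extremes'' mis-identifies the mechanism. Under~(\ref{eq6}) no agent can ever exceed the current $\max$ or fall below the current $\min$, balanced or not; the delicate issue is preserving the \emph{quantitative} distance-to-midpoint bound across sub-intervals, and in Martin--Girard this relies on a reciprocity/cut-balance hypothesis on the interaction weights, not on balancedness of $\mathcal G(t)$. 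If you want to reproduce their constant $\lambda=1-1/(8n^2)^{\lfloor n/2\rfloor}$ you will need to follow their bookkeeping on both sides of the cut simultaneously rather than invoke Assumption~1.
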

In fact, from the definition of sequence $\left\{t_p\right\}_{p\in \mathbb{N}}$ and the strong connectivity of $\delta-$graph, one could estimate the lower bound of the difference $P(t)-P(s)$ by $\frac{t-s}{(\lfloor 1/\delta\rfloor+1)\lfloor n/2\rfloor T}-1$ for any $t\geq s\geq0$. See the following lemma.
\begin{lemma}
Under Assumptions 3-4, for any $t\geq s\geq0$, if $x_{i}(s), i\in \mathcal{V}$ is updated by (\ref{eq6}), then it holds
\begin{equation}\label{eq9}\begin{split}
&\max_{1\leq i\leq n}x_{i}(t)-\min_{1\leq i\leq n}x_{i}(t)\leq H\gamma^{t-s}\left(\max_{1\leq i\leq n}x_{i}(s)-\min_{1\leq i\leq n}x_{i}(s)\right)
\end{split}\end{equation}
where $H=\lambda^{-1}$, $\gamma=\lambda^{\frac{1}{\left(\lfloor 1/\delta\rfloor+1\right)\left\lfloor n/2\right\rfloor T}}$ and $\lambda=1-\frac{1}{\left(8n^2\right)^{\lfloor n/2\rfloor}}$.
\end{lemma}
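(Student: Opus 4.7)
The plan is to combine the discrete-step contraction in Lemma 1 with an explicit upper bound on the growth of the switching times $t_p$, so as to convert the exponent $P(t)-P(s)$ into a quantity linear in $t-s$.

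First I would invoke Lemma 1, which already yields
\[
\max_{1\leq i\leq n}x_i(t)-\min_{1\leq i\leq n}x_i(t)\leq \lambda^{P(t)-P(s)}\Bigl(\max_{1\leq i\leq n}x_i(s)-\min_{1\leq i\leq n}x_i(s)\Bigr).
\]
So the entire task reduces to proving the lower estimate
\[
P(t)-P(s)\geq \frac{t-s}{(\lfloor 1/\delta\rfloor+1)\lfloor n/2\rfloor T}-1,
\]
since then $\lambda^{P(t)-P(s)}\leq \lambda^{-1}\lambda^{(t-s)/C}= H\gamma^{t-s}$ with $C=(\lfloor 1/\delta\rfloor+1)\lfloor n/2\rfloor T$, and the conclusion follows.

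The main work is to bound $t_{p+1}-t_p$ from above, which reduces by the telescoping $t_{p+1}-t_p=\sum_{h=0}^{\lfloor n/2\rfloor-1}(t_p^{h+1}-t_p^h)$ to controlling each $t_p^{h+1}-t_p^h$. Set $C_0=(\lfloor 1/\delta\rfloor+1)T$. I would argue as follows: by the strong connectivity of the $\delta$-graph $\mathcal{G}_{(\delta,T)}$ (Assumption 2), for every proper non-empty $S\subsetneqq\mathcal{V}$ there exists at least one $\delta$-edge $(j,i)$ with $j\in\mathcal{V}\setminus S$ and $i\in S$. For such a $\delta$-edge one has $\int_{\tau}^{\tau+T}a_{ij}(r)\,dr\geq \delta$ for every $\tau\geq 0$, so on a window of length $C_0$, by summing over $\lfloor 1/\delta\rfloor+1$ disjoint sub-windows of length $T$,
\[
\int_{t_p^h}^{t_p^h+C_0}a_{ij}(r)\,dr\geq (\lfloor 1/\delta\rfloor+1)\delta\geq 1.
\]
Taking the minimum over $S$ shows that at time $t_p^h+C_0$ the defining quantity in (\ref{eq5}) has already reached $1$, hence $t_p^{h+1}-t_p^h\leq C_0$, and summing over $h$ gives $t_{p+1}-t_p\leq \lfloor n/2\rfloor C_0=C$.

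Finally, if $t_p\leq s<t_{p+1}$ and $t_q\leq t<t_{q+1}$ with $q\geq p$, then $t-s<t_{q+1}-t_p\leq (q-p+1)C$, so $P(t)-P(s)=q-p\geq (t-s)/C-1$, which is exactly what is needed. The principal obstacle, and the only nontrivial step, is the cut-covering argument in the previous paragraph; once the uniform estimate $t_p^{h+1}-t_p^h\leq C_0$ is in hand, the rest is a direct substitution into Lemma 1.
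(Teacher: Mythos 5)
Your proposal is correct and follows essentially the same route as the paper: invoke Lemma 1, show each $t_p^{h+1}-t_p^h\leq(\lfloor 1/\delta\rfloor+1)T$ via the strong connectivity of $\mathcal{G}_{(\delta,T)}$ and the cut condition (\ref{eq5}), deduce $t_{p+1}-t_p\leq(\lfloor 1/\delta\rfloor+1)\lfloor n/2\rfloor T$ and hence $P(t)-P(s)\geq\frac{t-s}{(\lfloor 1/\delta\rfloor+1)\lfloor n/2\rfloor T}-1$. Your write-up is in fact slightly more careful than the paper's at the final step (the explicit $q-p\geq(t-s)/C-1$ bookkeeping), but the argument is the same.
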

 \begin{proof} For any $i\in S\subsetneqq\mathcal{V}$, if $\delta-$graph is strongly connected, there must exist an agent $j\in S\setminus\mathcal{V}$ such that
 $(j, i)\in \mathcal{E}_{\delta}$. From the definition of $\delta-$edge, we have
 \[\begin{split}
&\min_{S\subsetneqq\mathcal{V}, S\neq\mathbf{\varnothing}}\left\{\sum\limits_{i\in S}\sum\limits_{j\in S\setminus\mathcal{V}}\int_{t_p^{h}}^{t_p^{h}+(\lfloor 1/\delta\rfloor+1)T}a_{ij}(s)ds\right\}\\
&\geq \min_{(j, i)\in \mathcal{E}_{\delta}}\int_{t_p^{h}}^{t_p^{h}+(\lfloor 1/\delta\rfloor+1)T}a_{ij}(s)ds\\
&\geq(\lfloor 1/\delta\rfloor+1)\delta\\
&\geq1.
\end{split}\]
 Since $t_p^{h+1}$ is the smallest time such that $t>t_p^{h}$ and equation (\ref{eq5}) holds, one has $t_p^{h+1}-t_p^{h}\leq (\lfloor 1/\delta\rfloor+1)T$ for any $h\in\left\{0, \cdots, \lfloor n/2\rfloor-1\right\}$. Therefore, $t_{p+1}-t_p\leq(\lfloor 1/\delta\rfloor+1)\lfloor n/2\rfloor T$, which implies $P(t)-P(s)\geq \frac{t-s}{(\lfloor 1/\delta\rfloor+1)\lfloor n/2\rfloor T}-1$. Recall that $0<\left(1-\frac{1}{\left(8n^2\right)^{\lfloor n/2\rfloor}}\right)<1$ in Lemma 1, it follows the validity of (\ref{eq9}). \end{proof}
\begin{lemma}
 Under Assumptions 3 and 4, for any $t\geq s\geq0$, the state-transition matrix in (\ref{eq7}) satisfies the following inequality
\begin{equation}\label{eq10}
\left|\left[\mathbf{\Phi}(t,s)\right]_{ij}-\frac{1}{n}\right|\leq H\gamma^{t-s},~~~~~i,j\in\{1,\cdots, n\}
\end{equation}
where $H=\lambda^{-1}$, $\gamma=\lambda^{\frac{1}{\left(\lfloor 1/\delta\rfloor+1\right)\left\lfloor n/2\right\rfloor T}}$ and $\lambda=1-\frac{1}{\left(8n^2\right)^{\lfloor n/2\rfloor}}$.
\end{lemma}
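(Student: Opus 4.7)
The plan is to derive \eqref{eq10} by applying Lemma 2 column-by-column to the state-transition matrix and then anchoring the column entries around $1/n$ using the balanced-graph hypothesis. Fix $s\geq 0$ and $j\in\{1,\dots,n\}$, and consider the solution of the consensus system \eqref{eq6} with initial data $\mathbf{y}(s)=e_j$, the $j$-th standard basis vector in $\mathbb{R}^n$. By \eqref{eq7} the trajectory is $\mathbf{y}(t)=\mathbf{\Phi}(t,s)e_j$, so its $i$-th coordinate equals $[\mathbf{\Phi}(t,s)]_{ij}$. Applying Lemma 2 to this trajectory gives
\begin{equation*}
\max_{1\leq i\leq n}[\mathbf{\Phi}(t,s)]_{ij}-\min_{1\leq i\leq n}[\mathbf{\Phi}(t,s)]_{ij}\leq H\gamma^{t-s}\bigl(\max_i (e_j)_i-\min_i (e_j)_i\bigr)=H\gamma^{t-s},
\end{equation*}
since the entries of $e_j$ take the extreme values $1$ and $0$. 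This controls the spread of the entries of each column of $\mathbf{\Phi}(t,s)$.

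Next I would show that each column of $\mathbf{\Phi}(t,s)$ averages to exactly $1/n$. Differentiating $\mathbf{1}^T\mathbf{\Phi}(t,s)$ in $t$ and using $\dot{\mathbf{\Phi}}(t,s)=-\mathbf{L}(t)\mathbf{\Phi}(t,s)$ with $\mathbf{\Phi}(s,s)=I$ yields $\frac{d}{dt}\bigl(\mathbf{1}^T\mathbf{\Phi}(t,s)\bigr)=-\mathbf{1}^T\mathbf{L}(t)\mathbf{\Phi}(t,s)$. The balanced-graph assumption means that the column sums of $\mathbf{L}(t)$ vanish, that is, $\mathbf{1}^T\mathbf{L}(t)=0$, so $\mathbf{1}^T\mathbf{\Phi}(t,s)\equiv\mathbf{1}^T$. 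Equivalently, every column of $\mathbf{\Phi}(t,s)$ sums to $1$, so the arithmetic mean of the entries in column $j$ is $\frac{1}{n}$.

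Combining the two steps finishes the proof. Because $\frac{1}{n}$ is the average of $\{[\mathbf{\Phi}(t,s)]_{ij}\}_{i=1}^n$, it lies between their minimum and maximum, and therefore every individual entry $[\mathbf{\Phi}(t,s)]_{ij}$ differs from $\frac{1}{n}$ by at most the spread $\max_i[\mathbf{\Phi}(t,s)]_{ij}-\min_i[\mathbf{\Phi}(t,s)]_{ij}$, which by the first step is at most $H\gamma^{t-s}$. Since $j$ was arbitrary, \eqref{eq10} follows.

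The only genuinely substantive ingredient beyond bookkeeping is the conservation identity $\mathbf{1}^T\mathbf{\Phi}(t,s)=\mathbf{1}^T$, which depends critically on the balanced-graph assumption; the exponential-spread bound is already packaged in Lemma 2. Thus I expect no real obstacle—the delicate connectivity/contraction work has already been done upstream, and Lemma 3 amounts to converting a spread estimate into a pointwise estimate once the correct ``centre'' $1/n$ is identified from balancedness.
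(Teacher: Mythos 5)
Your proof is correct and follows the same overall strategy as the paper: apply Lemma 2 to the trajectory of \eqref{eq6} started from $\mathbf{e}_j$ at time $s$ to bound the spread of the $j$-th column of $\mathbf{\Phi}(t,s)$ by $H\gamma^{t-s}$, then use balancedness to identify $1/n$ as the correct centre. The one place where you diverge is the final step: the paper locates $1/n$ by noting that $\sum_i y_i(t)$ is conserved, so the exponential consensus limit is $[\mathbf{\Phi}(\infty,s)]_{ij}=1/n$, and then sandwiches each entry using the (unproven, though standard) monotonicity of $\max_i x_i(t)$ and $\min_i x_i(t)$ along \eqref{eq6}. You instead observe that $\mathbf{1}^T\mathbf{\Phi}(t,s)=\mathbf{1}^T$ for every $t$, so the instantaneous column average is exactly $1/n$, which automatically lies between the column's minimum and maximum; the pointwise bound then follows from the spread bound with no appeal to limits or monotonicity. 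Your version is slightly cleaner and self-contained. One peripheral remark applying to both arguments: the balancedness you invoke is Assumption 1 of the paper, not Assumptions 3 and 4 as the lemma's hypothesis line states (this appears to be a labelling slip in the paper, since its own proof also uses ``the graph is balanced'').
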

\begin{proof} Let $\textbf{e}_j$ be a standard unit base vector with the $j^{th}$ entry being one and others being zero. For any $j\in\{1,\cdots, n\}$, substituting $\textbf{y}(t)=\textbf{e}_j$ into equation (\ref{eq7}) yields $\textbf{y}(t)=\left[\mathbf{\Phi}(t,s)\right]_{\cdot j}$. From (\ref{eq9}) in Lemma 2, we have
$$
\max_{1\leq i\leq n}\left[\mathbf{\Phi}(t,s)\right]_{ij}-\min_{1\leq i\leq n}\left[\mathbf{\Phi}(t,s)\right]_{ij}\leq H\gamma^{t-s} ~~~~~j\in\{1,\cdots, n\}
$$
where $t\geq s\geq0$. Since the graph is balanced, $\sum_{i=1}^ny_i(t)$ is invariant, hence the average consensus is reached exponentially. This implies that $\left[\mathbf{\Phi}(\infty,s)\right]_{ij}=\frac{1}{n}$. By the fact that $\min_{1\leq i\leq n}x_i(t)$ is non-decreasing with (\ref{eq6}), it follows that
\begin{equation}\label{eq11}\begin{split}
\max_{1\leq i\leq n}\left[\mathbf{\Phi}(t,s)\right]_{ij}-\frac{1}{n} &=\max_{1\leq i\leq n}\left[\mathbf{\Phi}(t,s)\right]_{ij}-\min_{1\leq i\leq n}\left[\mathbf{\Phi}(\infty,s)\right]_{ij}\\
&\leq\max_{1\leq i\leq n}\left[\mathbf{\Phi}(t,s)\right]_{ij}-\min_{1\leq i\leq n}\left[\mathbf{\Phi}(t,s)\right]_{ij}\\
&\leq H\gamma^{t-s}.
\end{split}\end{equation}
Similarly, due to the fact that $\max_{1\leq i\leq n}x_i(t)$ is non-increasing, we can conclude
\begin{equation}\label{eq12}
\min_{1\leq i\leq n}\left[\mathbf{\Phi}(t,s)\right]_{ij}-\frac{1}{n}\geq -H\gamma^{t-s}.
\end{equation}
Inequalities (\ref{eq11}) and (\ref{eq12}) lead to the validity of (\ref{eq10}). \end{proof}
\begin{lemma}
Let $b(t)$ be a continuous function, if $\lim\limits_{t\rightarrow\infty}b(t)=b$ and $0<\gamma<1$, then $\lim\limits_{t\rightarrow\infty} \int_{0}^t \gamma^{t-s}$ $b(s)ds=-\frac{b}{ln\gamma}$.
\end{lemma}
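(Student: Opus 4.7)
The plan is to prove this by a standard $\varepsilon$-$T$ splitting of the integral, using the fact that $\int_0^\infty \gamma^u \, du = -1/\ln\gamma$ when $0<\gamma<1$. The intuition is that the kernel $\gamma^{t-s}$ concentrates all its mass within a bounded window to the left of $t$, so as $t\to\infty$ the integral only ``sees'' values $b(s)$ with $s$ very large, where $b(s)$ is already arbitrarily close to $b$.

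First I would fix $\varepsilon>0$ and, using $\lim_{s\to\infty} b(s)=b$, choose $T_\varepsilon$ so that $|b(s)-b|<\varepsilon$ for all $s\geq T_\varepsilon$. By continuity of $b$ on the compact interval $[0,T_\varepsilon]$ it is bounded there by some $M>0$. For $t>T_\varepsilon$ I would write
\begin{equation*}
\int_0^t \gamma^{t-s} b(s)\,ds \;=\; \underbrace{\int_0^{T_\varepsilon}\!\gamma^{t-s}b(s)\,ds}_{I_1(t)} \;+\; \underbrace{\int_{T_\varepsilon}^{t}\!\gamma^{t-s}b\,ds}_{I_2(t)} \;+\; \underbrace{\int_{T_\varepsilon}^{t}\!\gamma^{t-s}(b(s)-b)\,ds}_{I_3(t)} .
\end{equation*}

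Next I would estimate each piece. For $I_1(t)$, using $|b(s)|\le M$ and evaluating the elementary integral $\int_0^{T_\varepsilon}\gamma^{t-s}ds = (\gamma^{t-T_\varepsilon}-\gamma^t)/|\ln\gamma|$, one gets $|I_1(t)|\le M\gamma^{t-T_\varepsilon}/|\ln\gamma|\to 0$ as $t\to\infty$, since $0<\gamma<1$. For $I_2(t)$, a direct computation gives $I_2(t)=b(1-\gamma^{t-T_\varepsilon})/|\ln\gamma|$, which tends to $b/|\ln\gamma|=-b/\ln\gamma$. For $I_3(t)$, the choice of $T_\varepsilon$ yields $|I_3(t)|\le \varepsilon\int_{T_\varepsilon}^{t}\gamma^{t-s}ds\le \varepsilon/|\ln\gamma|$ uniformly in $t$.

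Combining these three bounds, $\limsup_{t\to\infty}\bigl|\int_0^t \gamma^{t-s} b(s)\,ds - (-b/\ln\gamma)\bigr|\le \varepsilon/|\ln\gamma|$, and since $\varepsilon>0$ was arbitrary the limit equals $-b/\ln\gamma$. I do not foresee any real obstacle here; the only mild subtlety is to be careful with signs, because $\ln\gamma<0$ must cancel with the sign coming from the antiderivative of $\gamma^{t-s}$ so that the bounding constants remain positive. This is a purely technical $\varepsilon$-$\delta$ argument and requires no structural insight beyond the exponential decay of the kernel.
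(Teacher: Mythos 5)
Your proof is correct and takes essentially the same approach as the paper's: both fix $\varepsilon>0$, split the integral at the time $T_\varepsilon$ beyond which $|b(s)-b|\le\varepsilon$, use boundedness of $b$ on the compact initial interval to show that contribution vanishes, and let $\varepsilon\to0$. The only cosmetic difference is that you isolate the constant piece $I_2$ and bound the absolute error by a single $\limsup$, whereas the paper derives matching $\liminf$/$\limsup$ bounds of the form $(b\mp\varepsilon)/(-\ln\gamma)$; both routes give the same conclusion.
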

\begin{proof} Since $\lim\limits_{t\rightarrow\infty}b(t)=b$, for arbitrary $\varepsilon>0$, there exists $T>0$ such that $|b(t)-b|\leq\varepsilon$ when $t\geq T$. Due to the fact that $b(t)$ is continuous, both the maximum value and the minimum value of $b(t)$ exist in closed interval $t\in[0,T]$. We denote them by $\overline{b}=\max_{0\leq t\leq T}b(t)$ and $\underline{b}=\min_{0\leq t\leq T}b(t)$, respectively. For $t\geq T$, we have
\[
\begin{split}
\int_{0}^t \gamma^{t-s}b(s)ds&=\int_{0}^T \gamma ^ {t-s}b(s)ds+\int_{T}^t \gamma^{t-s}b(s)ds\\
&\geq \underline{b}\int_{0}^T\gamma^{t-s}ds+(b-\varepsilon)\int_{T}^t\gamma^{t-s}ds\\
&=\frac{\underline{b}}{-ln\gamma}(\gamma^{t-T}-\gamma^{t})+\frac{b-\varepsilon}{-ln\gamma}(1-\gamma^{t-T}).
\end{split}
\]
Therefore, $\lim\inf_{t\rightarrow\infty} \int_{0}^t \gamma^{t-s}b(s)ds\geq\frac{b-\varepsilon}{-ln\gamma}$. Because $\varepsilon$ is arbitrary and fixed, we have $\lim\inf_{t\rightarrow\infty}$ $\int_{0}^t \gamma^{t-s}b(s)ds\geq\frac{b}{-ln\gamma}$. Similarly, we have
\[
\begin{split}
&\int_{0}^t \gamma^{t-s}b(s)ds\leq \frac{\overline{b}}{-ln\gamma}(\gamma^{t-T}-\gamma^{t})+\frac{b+\varepsilon}{-ln\gamma}(1-\gamma^{t-T}).
\end{split}
\]
Thus, it holds that $\lim\sup_{t\rightarrow\infty} \int_{0}^t \gamma^{t-s}b(s)ds\leq \frac{b+\varepsilon}{-ln\gamma}$. Due to the arbitrariness of $\varepsilon$, we have $\lim\sup_{t\rightarrow \infty} \int_{0}^t \gamma^{t-s}b(st)ds \leq\frac{b}{-ln\gamma}$. This and the fact $\lim\inf_{t\rightarrow\infty} \int_{0}^t \gamma^{t-s}b(s)ds\geq\frac{b}{-ln\gamma}$ imply that $\lim\limits_{t\rightarrow\infty} \int_{0}^t \gamma^{t-s}b(s)ds=-\frac{b}{ln\gamma}$. \end{proof}

Now, we can present the proof of Theorem 1.

\emph{Proof of  Theorem 1}. The proof is consisted of two parts. In part 1, we will prove that consensus can be achieved asymptotically by MAS (\ref{eq3}) with (\ref{eq4}). In part 2, we will be committed to showing that the state of each agent converges to the solution set of convex inequalities (\ref{eq1}). Now let us begin with the first part.

\textbf{Part 1.} We define a vector $\widetilde{\textbf{x}}_\mu(t)\in \mathbb{R}^{n}$ which stacks up the $\mu^{th}$ entry of $\textbf{x}_i(t)$, $i\in \textbf{V}$, in other words, the $j^{th}$ entry of vector $\widetilde{\textbf{x}}_\mu(t)\in \mathbb{R}^{n}$ is the $\mu^{th}$ entry of $\textbf{x}_j(t)$. Similarly, we also define vector $ \textbf{f}_{\mu}(t) \in \mathbb{R}^{n}$ to be the vector stacking up the $\mu^{th}$ entry of $\nabla g_{i}^{+}(\textbf{x}_i(t))$, $i\in \textbf{V}$. From (\ref{eq3}) and (\ref{eq4}), we have
$$
 \dot{\widetilde{\textbf{x}}}_\mu(t)=-\textbf{L}(t)\widetilde{\textbf{x}}_\mu(t)-b(t)\textbf{f}_{\mu}(t)
$$
where $\mu=1, 2, \cdots, m$. The term $-b(t)\textbf{f}_{\mu}(t)$ can be viewed as a control input of the linear system. By the basic properties of linear systems \cite{Brockett22}, we have
\begin{equation}\label{eq13}
\widetilde{\textbf{x}}_\mu(t)=\mathbf{\Phi}(t,0)\widetilde{\textbf{x}}_\mu(0)-\int_{0}^{t}b(\tau)\mathbf{\Phi}(t,\tau)\textbf{f}_{\mu}(\tau)d\tau
\end{equation}
where $\mathbf{\Phi}(t, 0)$ is the state-transition matrix. By Peano-Baker formula (see \cite{Brockett22} for detail), it can be concluded that $\mathbf{\Phi}(t,s)$ is a double-stochastic matrix under Assumption 1. Then, equation (\ref{eq13}) further implies that
\begin{equation}\label{eq14}
\textbf{1}^T\widetilde{\textbf{x}}_\mu(t)=\textbf{1}^T\widetilde{\textbf{x}}_\mu(0)-\int_{0}^{t}b(\tau)\textbf{1}^T\textbf{f}_{\mu}(\tau) d\tau.
\end{equation}
Note that $\textbf{1}^T\widetilde{\textbf{x}}_\mu(t)$ is a scalar. On the basis of (\ref{eq13}) and (\ref{eq14}), we have
\[\begin{split}
&\left|[\widetilde{\textbf{x}}_\mu(t)]_{i}-\frac{1}{n}\textbf{1}^T\widetilde{\textbf{x}}_\mu(t)\right|\\
&=\Big|\left([\mathbf{\Phi}(t,0)]_{i\cdot}-\frac{1}{n}\textbf{1}^T\right)\widetilde{\textbf{x}}_\mu(0)
-\int_{0}^{t}b(\tau)\left([\mathbf{\Phi}(t,\tau)]_{i\cdot}-\frac{1}{n}\textbf{1}^T\right)\textbf{f}_{\mu}(\tau)d\tau\Big|\\
&\leq\left|\left([\mathbf{\Phi}(t,0)]_{i\cdot}-\frac{1}{n}\textbf{1}^T\right)\widetilde{\textbf{x}}_\mu(0)
\right|+\int_{0}^{t}b(\tau)\left|\left([\mathbf{\Phi}(t,\tau)]_{i\cdot}-\frac{1}{n}\textbf{1}^T\right)\textbf{f}_{\mu}(\tau)\right|d\tau\\
&\leq\max_{1\leq j\leq n}\left|[\mathbf{\Phi}(t,0)]_{ij}-\frac{1}{n}\right|\left\|\widetilde{\textbf{x}}_\mu(0)\right\|_1
+K\int_{0}^{t}b(\tau)\max_{1\leq j\leq n}\left|[\mathbf{\Phi}(t,\tau)]_{ij}-\frac{1}{n}\right|d\tau\\
\end{split}\]
for every $i=1,\cdots, n$. By Lemma 3, it follows
\begin{equation}\label{eq15}\begin{split}
\left|[\widetilde{\textbf{x}}_\mu(t)]_{i}-\frac{1}{n}\textbf{1}^T\widetilde{\textbf{x}}_\mu(t)\right|\leq H\gamma^t\left\|\widetilde{\textbf{x}}_\mu(0)\right\|_1+KH\int_{0}^{t}b(\tau)\gamma^{t-\tau}d\tau.\\
\end{split}\end{equation}
Because $0<\gamma<1$ and $\lim\limits_{t\rightarrow\infty}b(t)=0$, it follows from Lemma 4 that $\lim\limits_{t\rightarrow\infty}\left|[\widetilde{\textbf{x}}_\mu(t)]_{i}-\frac{1}{n}\textbf{1}^T\widetilde{\textbf{x}}_\mu(t)\right|=0$ for any $i=1,\cdots, n$. This means that the limits of the $\mu^{th}$ entries of all $\textbf{x}_i(t)$ are equal. Note that each component of $\textbf{x}_i(t)$ is decoupled in (\ref{eq4}). Therefore, consensus is reached for any $\mu\in\{1,\cdots,m\}$, implying that MAS (\ref{eq3}) with (\ref{eq4}) reaches consensus asymptotically, i.e., $\lim\limits_{t\rightarrow\infty}\|\textbf{x}_i(t)-\textbf{x}^*\|=0$ for any $i\in\mathcal{V}$.

\textbf{Part 2.} For ease of description, we denote the average value of all $\textbf{x}_i(t)$ by $\bar{\textbf{x}}(t)=\frac{1}{n}\sum\limits_{i=1}^{n}\textbf{x}_{i}(t)$. From (\ref{eq15}), we can further conclude that
\begin{equation}\label{eq16}\begin{split}
\left\|\textbf{x}_{i}(t)-\bar{\textbf{x}}(t)\right\|\leq &H\sqrt{m}\gamma^t\max_{1\leq\mu\leq m}\left\|\widetilde{\textbf{x}}_\mu(0)\right\|_1+KH\sqrt{m}\int_{0}^{t}b(\tau)\gamma^{t-\tau}d\tau,~~~~~i\in \mathcal{V}
\end{split}\end{equation}
where $\widetilde{\textbf{x}}_\mu(t)$ is defined as Part 1. Because $b(t)$ is non-increasing and positive, it holds that $\int_{0}^{\infty}b(s)\gamma^sds\leq b(0)\int_{0}^{\infty}\gamma^sds=\frac{b(0)}{-ln\gamma}$. Furthermore, we have
\begin{equation}\label{eq17}\begin{split}
\int_{0}^{\infty}\int_{0}^{s}b(s)b(\tau)\gamma^{t-\tau}d\tau ds&=\int_{0}^{\infty}\int_{0}^{s}\gamma^{\theta}b(s)b(s-\theta)d\theta ds\\
&=\int_{0}^{\infty}\gamma^{\theta}\int_{\theta}^{\infty}b(s)b(s-\theta)ds d\theta\\
&\leq \int_{0}^{\infty}\gamma^{\theta}\int_{\theta}^{\infty}b(s-\theta)b(s-\theta)ds d\theta\\
&=\frac{1}{-ln\gamma}\int_{0}^{\infty}b^2(s)ds\\
&<\infty
\end{split}\end{equation}
where the first equality holds by letting $t-\tau=\theta$, the second one results by changing the order of the
integrals, and the first inequality comes from the fact that $b(t)$ is non-increasing. Therefore, by inequalities (\ref{eq16}) and (\ref{eq17}), we have
\begin{equation}\label{eq18}\begin{split}
\int_{0}^{\infty}b(t)\left\|\textbf{x}_{i}(t)-\bar{\textbf{x}}(t)\right\|dt&\leq \frac{b(0)H\sqrt{m}}{-ln\gamma}\max_{1\leq\mu\leq m}\left\|\widetilde{\textbf{x}}_\mu(0)\right\|_1+\frac{KH\sqrt{m}}{-ln\gamma}\int_{0}^{\infty}b^2(t)dt\\
&<\infty
\end{split}\end{equation}
for any $i\in \mathcal{V}$. MAS (\ref{eq3}) with (\ref{eq4}) can be rewritten as
$$
\dot{\textbf{x}}(t)=-(\textbf{L}(t)\otimes I)\textbf{x}-b(t)\mathbf{\nabla}(t)
$$
where $\textbf{x}(t)=\left[\textbf{x}_1^T(t), \cdots, \textbf{x}_n^T(t)\right]^T$ and $\mathbf{\nabla}(t)=\left[[\nabla g_1^+(\textbf{x}_1(t))]^T, \cdots, [\nabla g_n^+(\textbf{x}_n(t))]^T\right]^T$. From Assumption 1, $\textbf{1}^T\textbf{L}=0$. Let $\bar{\textbf{x}}(t)=\frac{1}{n}(\textbf{1}^T\otimes I)\textbf{x}(t)$, we have
\begin{equation}\label{eq19}
\dot{\bar{\textbf{x}}}(t)=-\frac{1}{n}b(t)(\textbf{1}^T\otimes I)\mathbf{\nabla}(t)=-\frac{1}{n}b(t)\sum\limits_{i=1}^{n}\nabla g_i^+(\textbf{x}_i(t)).
\end{equation}
Now consider a function $d(\cdot):\mathbb{R}^{m}\rightarrow\mathbb{R}$ given by $d(\bar{\textbf{x}}(t))=\frac{1}{2}\|\bar{\textbf{x}}(t)-\textbf{x}_0\|^2$, where $\textbf{x}_0\in\textbf{X}$.
Along with equation (\ref{eq19}), taking the derivative of function $d$ with respect to $t$ yields
\begin{equation}\label{eq20}\begin{split}
\dot{d}(\bar{\textbf{x}}(t))&=-\frac{1}{n}b(t)\sum\limits_{i=1}^{n} \left\langle\nabla g_i^+(\textbf{x}_i(t)), \bar{\textbf{x}}(t)-\textbf{x}_0\right\rangle\\
&=-\frac{1}{n}b(t)\sum\limits_{i=1}^{n} \left\langle\nabla g_i^+(\textbf{x}_i(t)), \bar{\textbf{x}}(t)-\textbf{x}_i(t)\right\rangle+\frac{1}{n}b(t)\sum\limits_{i=1}^{n} \left\langle\nabla g_i^+(\textbf{x}_i(t)), \textbf{x}_0-\textbf{x}_i(t)\right\rangle.
\end{split}\end{equation}
Due to the property of bounded subgradients, it holds that $\|g_i^+(\textbf{x})-g_i^+(\textbf{y})\|\leq K\|\textbf{x}-\textbf{y}\|$ for arbitrary vectors $\textbf{x}, \textbf{y}\in \mathbb{R}^{m}$. By the fact that function $g_i^+(\cdot)$ is convex and $\textbf{\emph{g}}^+(\textbf{x}_0)=\textbf{0}$, it follows from inequality (\ref{eq2}) that
 \begin{equation}\label{eq21}\begin{split}
\left\langle\nabla g_i^+(\textbf{x}_i(t)), \textbf{x}_0-\textbf{x}_i(t)\right\rangle&\leq-g_i^+(\textbf{x}_i(t))\\
&=-g_i^+(\bar{\textbf{x}}(t))+g_i^+(\bar{\textbf{x}}(t))- g_i^+(\textbf{x}_i(t))\\
&\leq-g_i^+(\bar{\textbf{x}}(t))+K\|\bar{\textbf{x}}(t)-\textbf{x}_i(t)\|.
\end{split}\end{equation}
 Then, combining (\ref{eq20}) and (\ref{eq21}), we have
\begin{equation}\label{eq22}\begin{split}
\dot{d}(\bar{\textbf{x}}(t))\leq \frac{2K}{n}b(t)\sum\limits_{i=1}^{n} \left\|\textbf{x}_i(t)- \bar{\textbf{x}}(t)\right\|
-\frac{1}{n}b(t)\sum\limits_{i=1}^{n} g_i^+(\bar{\textbf{x}}(t)).
\end{split}\end{equation}
Integrating both sides of inequality (\ref{eq22}) over $[0, t]$ for any $t\geq0$ yields
\begin{equation}\label{eq23}\begin{split}
d(\bar{\textbf{x}}(t))-d(\bar{\textbf{x}}(0))&\leq \frac{2K}{n}\sum\limits_{i=1}^{n} \int_{0}^{t}b(\tau)\left\|\textbf{x}_i(\tau)- \bar{\textbf{x}}(\tau)\right\|d\tau-\frac{1}{n}\sum\limits_{i=1}^{n}\int_{0}^{t}b(\tau)g_i^+(\bar{\textbf{x}}(\tau))d\tau.
\end{split}\end{equation}
Now we denote function $ h(t)=\sum\limits_{i=1}^{n}\int_{0}^{t}b(\tau)\left\|\textbf{x}_i(\tau)- \bar{\textbf{x}}(\tau)\right\|d\tau$. It is obvious that $ h(t)$ is non-decreasing with respect to $t$. Inequality (\ref{eq18}) shows that $h(t)$ is upper bounded. This implies that $h(t)$ converges, i.e., there exists a $0\leq h^*<\infty$ such that $\lim\limits_{t\rightarrow\infty}h(t)=h^*$. Furthermore, for any $t_{1}>t_{2}>0$, it holds that $\int_{0}^{t_{1}}b(\tau)g_i^+(\bar{\textbf{x}}(\tau))d\tau\geq\int_{0}^{t_{2}}b(\tau)g_i^+(\bar{\textbf{x}}(\tau))d\tau$ due to the fact that $g_i^+(\bar{\textbf{x}}(\tau))\geq0$ and $b(t)\geq0$. Therefore, one has
$$
d(\bar{\textbf{x}}(t_{1}))-d(\bar{\textbf{x}}(t_{2}))\leq \frac{2K}{n}(h(t_{1})-h(t_{2})).
$$
This implies $\lim\limits_{t\rightarrow\infty}\sup d(\bar{\textbf{x}}(t))-\lim\limits_{t\rightarrow\infty}\inf d(\bar{\textbf{x}}(t))\leq\frac{2K}{n} (\lim\limits_{t\rightarrow\infty}\sup h(t)-\lim\limits_{t\rightarrow\infty}\inf h(t))=0$. As a result, $\lim\limits_{t\rightarrow\infty}d(\bar{\textbf{x}}(t))$ exists. On the other hand, by (\ref{eq23}), we have
\begin{equation}\label{eq24}\begin{split}
\frac{1}{n}\sum\limits_{i=1}^{n}\int_{0}^{\infty}b(t)g_i^+(\bar{\textbf{x}}(t))dt\leq \frac{2K}{n}h^*+d(\bar{\textbf{x}}(0))-d(\bar{\textbf{x}}(\infty))<\infty.
\end{split}\end{equation}
Note that $g_i^+(\bar{\textbf{x}}(t))$ is non-negative for any $i\in\mathcal{V}$ and $t>0$. Then, by (\ref{eq24}) and the fact $\int_{0}^\infty b(t)dt\rightarrow\infty$, we have $\lim\limits_{t\rightarrow\infty}\inf g_i^+(\bar{\textbf{x}}(t))=0$ for any $i\in \mathcal{V}$. Thus, there exists a subsequence $\{\bar{\textbf{x}}(t_{k})\}$ of $\{\bar{\textbf{x}}(t)\}$ that converges to a point in the solution set of convex inequality (\ref{eq1}). Without loss of generality, assume that $\textbf{x}^*$ is this point. We have $\lim\limits_{k\rightarrow\infty}\bar{\textbf{x}}(t_{k})=\lim\limits_{t\rightarrow\infty}\inf\bar{\textbf{x}}(t)=\textbf{x}^*\in\textbf{X}$. Moreover, let $x_0=x^*$ in $d(t)$,  the fact that $d(t)$ converges implies $\lim\limits_{t\rightarrow\infty}\bar{\textbf{x}}(t)=\textbf{x}^*\in\textbf{X}$. Recall that the result in Part 1 implies $\lim\limits_{t\rightarrow\infty}\|\textbf{x}_i(t)-\bar{\textbf{x}}_j(t)\|=0$ for any $i\in\mathcal{V}$. Hence, we have $\lim\limits_{t\rightarrow\infty}\textbf{x}_i(t)=\textbf{x}^*\in\textbf{X}$ for any $i\in\mathcal{V}$.
\QEDA
\begin{remark}
In Theorem 1, the case when the solution set is non-empty is discussed. In fact, throughout the proof, it is not difficult to draw a conclusion that if the convex inequalities' solution set $\textbf{X}$ is empty, MAS (\ref{eq3}) with (\ref{eq4}) will reach consensus asymptotically, and each agent's state converges to a common state $\textbf{x}^*$ such that $\sum\limits_{i=1}^{n} g_i^+(\textbf{x}^*)=\min \sum\limits_{i=1}^{n} g_i^+(\textbf{x})$. Note that even if the solution set of inequalities (\ref{eq1}) is empty, the first part of the proof remains to be valid. Hence, consensus is asymptotically reached. Now we show that the consensus state minimizes the function $\sum\limits_{i=1}^{n} g_i^+$. For the sake of simplicity, we denote $f^*=\min \sum\limits_{i=1}^{n} g_i^+(\textbf{x})$. Then, (\ref{eq21}) should be replaced by $\langle\nabla g_i^+(\textbf{x}_i(t)), $ $\textbf{x}_0-\textbf{x}_i(t)\rangle\leq-(g_i^+(\bar{\textbf{x}}(t))-g_i^+(\textbf{x}_0))+K\|\bar{\textbf{x}}(t)-\textbf{x}_i(t)\|$, where $\textbf{x}_0$ is another point such that $\sum\limits_{i=1}^{n} g_i^+(\textbf{x}_0)=f^*$. As a result, inequality (\ref{eq24}) is replaced by $\frac{1}{n}\int_{0}^{\infty}b(t)(\sum\limits_{i=1}^{n}g_i^+(\bar{\textbf{x}}(t))-f^*)dt<\infty$, it can be concluded that $\lim\limits_{t\rightarrow\infty}\textbf{x}_i(t)=\textbf{x}^*$, and $\sum\limits_{i=1}^{n} g_i^+(\textbf{x}^*)=f^*$.
\end{remark}


\section{Distributed optimization with convex inequality constraints}\label{se3}

Now we extend our method to solving a constrained optimization problem. Different from the problem of optimizing the sum of local objective functions subject to the intersection of constraint sets in \cite{Nedi08, Lin10}, our goal is to distributively minimize the objective function subject to convex inequalities, which is stated as follow.
\begin{equation}\label{eq25}\begin{split}
&\min~~~~~\sum\limits_{i=1}^{n}f_i(\textbf{x})\\
&\textrm{subject}~\textrm{to} ~~~ \textbf{g}(\textbf{x})\leq0
\end{split}\end{equation}
where $\textbf{x}\in\mathbb{R}^{m}$ and $\textbf{g}(\cdot)=[g_{1}(\cdot), \cdots, g_{n}(\cdot)]$, both $f_{i}(\cdot)$ and $g_{i}(\cdot):\mathbb{R}^{m}\rightarrow\mathbb{R}$ are convex functions. Agent $i$ can only have access to $f_{i}(\cdot)$ and $g_{i}(\cdot)$. The following assumptions are made in this section.
\begin{assumption}
The set $\{\textbf{x}\in\mathbb{R}^{m}|\textbf{g}(\textbf{x})\leq0\}$ is non-empty.
\end{assumption}
\begin{assumption}
$\|\nabla f_{i}(\textbf{x})\|\leq K_1$ and $\|\nabla g_{i}(\textbf{x})\|\leq K_2$ for some $K_1, K_2>0$, $i=1,\cdots,n$.
\end{assumption}

{Assumption 5 implies Slater's constraint qualification condition holds \cite{M. S. Bazaraa24}, then the solution set of problem (\ref{eq25}) is guaranteed to be non-empty. A Lagrange function of problem (\ref{eq25}) is defined as}
\[
F(\textbf{x},\textbf{z})=\sum\limits_{i=1}^{n}F_i(\textbf{x},z_i); F_i(\textbf{x},z_i)=f_i(\textbf{x})+z_ig_i(\textbf{x})
\]
{where $\textbf{z}=[z_1,\cdots, z_n]^T$ is the Lagrange multiplier such that $\textbf{z}\geq0$. It is obvious that $F_i(\textbf{x},z_i)$ is convex with $\textbf{x}$ and linear with $z_i$ for any $i=1,\cdots, n$. Thus, $F_i(\textbf{x},z_i)$ is a convex-concave function and so is F(\textbf{x},\textbf{z}). Based on Saddle-point Theorem \cite{M. S. Bazaraa24}, we know that $\textbf{x}^*$ is an optimal solution of (\ref{eq25}) if and only if there exists a positive vector $\textbf{z}^*\in \mathbb{R}^{n}$ such that $(\textbf{x}^*, \textbf{z}^*)$ is a saddle point of $F(\textbf{x}, \textbf{z})$, i.e., $F(\textbf{x}^*, \textbf{z})\leq F(\textbf{x}^*, \textbf{z}^*)\leq F(\textbf{x}, \textbf{z}^*)$ for any $\textbf{x}\in\mathbb{R}^{m}$ and $\textbf{z}\in\mathbb{R}^{n}$. For ease, we use $\textbf{X}^*\times \textbf{Z}^*$ to represent the saddle point set, where $\textbf{X}^*$ denotes the optimal solution set of (\ref{eq25}) and $\textbf{Z}^*$ denotes the corresponding optimal set of Lagrange multipliers.}

{Before extending (\ref{eq4}) for searching the optimal solution to (\ref{eq25}), we introduce the following compact and convex sets}
\[\Omega_i=\{z\in\mathbb{R}|0\leq z\leq\bar{z}_i\},~~~i=1, \cdots, n \]
{where each $\bar{z}_i$ is a finite positive real number, and can be sufficiently large. Denote the Cartesian product of $\Omega_i, i=1, \cdots, n$ by $\Omega$, i.e., $\Omega=\Omega_1\times\cdots\times \Omega_n$. Now we extend (\ref{eq4}) for problem (\ref{eq25}) as follows:}
\begin{equation}\label{eq26}\begin{split}
\textbf{u}_{i}(t)=&\sum\limits_{j\in\mathcal{N}_{i}(t)}a_{ij}(t)(\textbf{x}_j(t)-\textbf{x}_i(t))-b(t) (\nabla f_i(\textbf{x}_i(t))+z_i(t)\nabla g_i(\textbf{x}_i(t)))~~~i\in \mathcal{V}
\end{split}\end{equation}
{where $b(t)$ is defined as (4) and $z_i(t)\in\mathbb{R}$ is an auxiliary variable, whose dynamic is given as}
\[
\dot{z}_i(t)=P_{\mathcal{T}_{\Omega_i}(z_i(t))}[b(t)g_i(\textbf{x}_i(t))]
\]
{where $\mathcal{T}_{\Omega_i}(z_i(t))$ is the tangent cone of $\Omega_i$ at point $z_i(t)$, and the initial value is set to be $z_i(0)=z_{i0}\in\Omega_i$. By the definition of $\Omega_i$, it is not difficult to compute that $P_{\mathcal{T}_{\Omega_i}(z_i(t))}[b(t)g_i(x_i(t))]=0$ if $z_i(t)=0$, $b(t)g_i(x_i(t))<0$, or $z_i(t)=\overline{z}_i$, $b(t)g_i(\textbf{x}_i(t))>0$; $P_{\mathcal{T}_{\Omega_i}(z_i(t))}[b(t)g_i(\textbf{x}_i(t))]=b(t)g_i(\textbf{x}_i(t))$ otherwise.}
{\begin{theorem}
Under Assumptions 1, 2, 5 and 6, if $\Omega\cap \textbf{Z}^*$ is non-empty, then MAS (\ref{eq3}) with (\ref{eq26}) reaches consensus asymptotically and the consensus state is an optimal solution to (\ref{eq25}).
\end{theorem}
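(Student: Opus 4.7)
The plan is to mirror the two-part structure of the proof of Theorem 1 but now in the primal–dual setting induced by (\ref{eq26}). Part 1 (consensus among the $\textbf{x}_i$) transfers with essentially no changes: the dual variables $z_i(t)$ are confined to the compact set $\Omega_i=[0,\bar z_i]$ by construction of the tangent-cone projection, so by Assumption 6 the forcing term $\nabla f_i(\textbf{x}_i)+z_i\nabla g_i(\textbf{x}_i)$ is uniformly bounded in norm by $K_1+\bar z_i K_2$. Hence it plays exactly the role that $\nabla g_i^+$ played in (\ref{eq4}), and Lemmas 1--4 together with the argument leading to (\ref{eq15})--(\ref{eq18}) give both $\lim_{t\to\infty}\|\textbf{x}_i(t)-\bar{\textbf{x}}(t)\|=0$ and $\int_0^\infty b(t)\|\textbf{x}_i(t)-\bar{\textbf{x}}(t)\|\,dt<\infty$, where $\bar{\textbf{x}}(t)=\frac{1}{n}\sum_i\textbf{x}_i(t)$.

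For Part 2 (optimality of the consensus state) I would fix a saddle point $(\textbf{x}^*,\textbf{z}^*)\in \textbf{X}^*\times(\Omega\cap\textbf{Z}^*)$ (which exists by the hypothesis of the theorem) and introduce the Lyapunov-type function
\[
V(t)=\tfrac{1}{2}\|\bar{\textbf{x}}(t)-\textbf{x}^*\|^2+\tfrac{1}{2n}\sum_{i=1}^n(z_i(t)-z_i^*)^2.
\]
The key facts driving the computation are (i) Assumption 1 implies $\dot{\bar{\textbf{x}}}=-\frac{b(t)}{n}\sum_i[\nabla f_i(\textbf{x}_i)+z_i\nabla g_i(\textbf{x}_i)]$, and (ii) the standard tangent-cone projection inequality gives $(z_i-z_i^*)\dot z_i\le b(t)(z_i-z_i^*)g_i(\textbf{x}_i)$ for every $z_i^*\in\Omega_i$. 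Convexity of $f_i$ and $g_i$, combined with the bounded-subgradient estimates $|f_i(\textbf{x})-f_i(\textbf{y})|\le K_1\|\textbf{x}-\textbf{y}\|$ and $|g_i(\textbf{x})-g_i(\textbf{y})|\le K_2\|\textbf{x}-\textbf{y}\|$, allow one to bound the inner products $\langle\nabla f_i(\textbf{x}_i),\bar{\textbf{x}}-\textbf{x}^*\rangle$ and $z_i\langle\nabla g_i(\textbf{x}_i),\bar{\textbf{x}}-\textbf{x}^*\rangle$ below by $f_i(\bar{\textbf{x}})-f_i(\textbf{x}^*)$ and $z_ig_i(\bar{\textbf{x}})-z_ig_i(\textbf{x}^*)$ respectively, up to terms of the form $Cb(t)\|\textbf{x}_i-\bar{\textbf{x}}\|$. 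After bookkeeping and using complementary slackness $z_i^*g_i(\textbf{x}^*)=0$, these bounds collapse into
\[
\dot V(t)\le -\tfrac{b(t)}{n}\bigl[F(\bar{\textbf{x}}(t),\textbf{z}^*)-F(\textbf{x}^*,\textbf{z}^*)\bigr]+\tfrac{b(t)}{n}\sum_{i=1}^nz_i(t)g_i(\textbf{x}^*)+Cb(t)\sum_{i=1}^n\|\textbf{x}_i(t)-\bar{\textbf{x}}(t)\|.
\]
The first bracket is non-negative by the saddle-point inequality $F(\cdot,\textbf{z}^*)\ge F(\textbf{x}^*,\textbf{z}^*)$, the second sum is non-positive since $z_i\ge 0$ and $g_i(\textbf{x}^*)\le 0$, and the last term is integrable in $t$ by Part 1. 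Integrating therefore shows that $V(t)$ is bounded (so in particular $\bar{\textbf{x}}(t)$ stays bounded) and that $\int_0^\infty b(t)[F(\bar{\textbf{x}}(t),\textbf{z}^*)-F(\textbf{x}^*,\textbf{z}^*)]\,dt<\infty$.

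With $\int_0^\infty b(t)\,dt=\infty$ and the non-negativity of the integrand, one extracts a subsequence $\{\bar{\textbf{x}}(t_k)\}$ along which $F(\bar{\textbf{x}}(t_k),\textbf{z}^*)\to F(\textbf{x}^*,\textbf{z}^*)$; boundedness of $\bar{\textbf{x}}$ provides a further convergent subsequence with limit $\textbf{x}^\dagger$, and the saddle-point property forces $\textbf{x}^\dagger\in\textbf{X}^*$. Replacing $\textbf{x}^*$ by $\textbf{x}^\dagger$ in $V$ and repeating the derivation shows that $\|\bar{\textbf{x}}(t)-\textbf{x}^\dagger\|^2+\frac{1}{n}\sum_i(z_i(t)-z_i^*)^2$ has a limit, and since the $\bar{\textbf{x}}$-part of this limit is zero along the subsequence, one concludes $\bar{\textbf{x}}(t)\to \textbf{x}^\dagger$. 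Combined with Part 1, this yields $\textbf{x}_i(t)\to\textbf{x}^\dagger\in\textbf{X}^*$ for every $i$, as required. I expect the main obstacle to be the careful algebraic manipulation that turns the primal–dual cross-terms into the saddle-gap $F(\bar{\textbf{x}},\textbf{z}^*)-F(\textbf{x}^*,\textbf{z}^*)$ while keeping the remainder controllable by the Part 1 integrability; the secondary nuisance is that $\textbf{z}(t)$ is not known to converge, so the final limit argument must isolate the $\bar{\textbf{x}}$-component of $V$ rather than relying on $V$ itself going to zero.
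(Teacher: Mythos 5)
Your proposal follows essentially the same route as the paper: the same quadratic primal--dual Lyapunov function centered at a saddle point in $\textbf{X}^*\times(\Omega\cap\textbf{Z}^*)$, the same tangent-cone/normal-cone inequality for $\dot z_i$, and the same decomposition of the drift into two non-negative saddle gaps plus a $b(t)\sum_i\|\textbf{x}_i(t)-\bar{\textbf{x}}(t)\|$ remainder made integrable by Part 1 (your term $\sum_i z_i(t)g_i(\textbf{x}^*)$ is, via complementary slackness, exactly the paper's $-(F(\textbf{x}_0,\textbf{z}_0)-F(\textbf{x}_0,\textbf{z}(t)))$). The one step you flag as delicate --- recovering convergence of $\bar{\textbf{x}}(t)$ itself when only the combined function $V$ converges and $\textbf{z}(t)$ need not --- is treated no more explicitly in the paper's own proof, which simply asserts the final limit.
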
}
{\begin{proof}{Let $\textbf{x}(t)=\left[\textbf{x}_1^T(t), \cdots, \textbf{x}_n^T(t)\right]^T$, $\mathbf{\nabla}_i(t)$
$=\nabla f_i(\textbf{x}_i(t))+z_i(t)\nabla g_i(\textbf{x}_i(t))$ for $i=1, \cdots, n$ and $\mathbf{\nabla}(t)=\left[[\mathbf{\nabla}_1(t)]^T, \cdots, [\mathbf{\nabla}_n(t)]^T\right]^T$, then MAS (\ref{eq3}) with (\ref{eq26}) can be rewritten as}
$$
\dot{\textbf{x}}(t)=-(\textbf{L}(t)\otimes I)\textbf{x}-b(t)\mathbf{\nabla}(t).
$$
{Note that $z_i(t)\in\Omega$ holds for any $t\geq0$, similar to inequality (\ref{eq16}), it can be concluded that}
\[\begin{split}
\left\|\textbf{x}_{i}(t)-\bar{\textbf{x}}(t)\right\|\leq & H\sqrt{m}\gamma^t\max_{1\leq\mu
\leq m}\left\|\widetilde{\textbf{x}}_\mu(0)\right\|_1+\bar{K}H\sqrt{m}\int_{0}^{t}b(\tau)\gamma^{t-\tau}d\tau~~~~~i\in \mathcal{V}.
\end{split}\]
{where $\bar{K}=K_1+K_2\max_{1\leq\mu\leq n}\{z_i\}$. By Lemma 4, we have $\lim\limits_{t\rightarrow\infty}\|\textbf{x}_i(t)-\bar{\textbf{x}}(t)\|=0$ for any $i\in\mathcal{V}$. Thus, MAS (\ref{eq3}) with (\ref{eq26}) reaches consensus asymptotically. Through a similar approach to those in (\ref{eq17}) and (\ref{eq18}), it follows that $\int_{0}^{\infty}b(t)\left\|\textbf{x}_{i}(t)-\bar{\textbf{x}}(t)\right\|dt<\infty$. Furthermore, consider the function}
 \[\begin{split}
 &d(\bar{\textbf{x}}(t), \textbf{z}(t))=d_1(\bar{\textbf{x}}(t))+d_2(\textbf{z}(t)), \\
 &d_1(\bar{\textbf{x}}(t))=\frac{1}{2}\|\bar{\textbf{x}}(t)-\textbf{x}_0\|^2,\\
 &d_2(\textbf{z}(t))= \frac{1}{2n}\|\textbf{z}(t)-\textbf{z}_0\|^2\end{split}\]
{ where $\textbf{x}_0\in\textbf{X}^*$ and $\textbf{z}_0\in\Omega\cap \textbf{Z}^*$, $\textbf{z}_0=[z_{10},\cdots, z_{n0}]^T$. It is obvious that $(\textbf{x}_0, \textbf{z}_0)\in\textbf{X}^*\times \textbf{Z}^*$. Since $F_i(\textbf{x},z_i)$ is convex with respect to $\textbf{x}$. Similar to (\ref{eq20})-(\ref{eq22}), we have}
\begin{equation}\label{eq27}\begin{split}
 \dot{d}_1(\bar{\textbf{x}}(t))\leq& \frac{2\bar{K}}{n}b(t)\sum\limits_{i=1}^{n} \left\|\textbf{x}_i(t)- \bar{\textbf{x}}(t)\right\|-\frac{1}{n}b(t)(F(\bar{\textbf{x}}(t), \textbf{z}(t))-F(\textbf{x}_{0}, \textbf{z}(t))).
\end{split}\end{equation}
{Moreover, $\dot{z}_i(t)=P_{\mathcal{T}_{\Omega_i}(z_i(t))}[b(t)g_i(\textbf{x}_i(t))]$ implies that there exists an element $c_i(z_i(t))\in\mathcal{C}_{\Omega_i}(z_i(t))$ such that $\dot{z}_i(t)=b(t)g_i(\textbf{x}_i(t))-c_i(z_i(t))$, where $\mathcal{C}_{\Omega_i}(z_i(t))=\{d|d(z'-z_i(t))\leq0, \forall z'\in\Omega_i\}$ is the
normal cone of $\Omega_i$ at element $z_i(t)\in\Omega_i$ (see \cite{Brogliato30} for detail). Thus, we have}
\[\begin{split}
 \dot{d}_2(\bar{\textbf{x}}(t))&\leq\frac{1}{n}\sum\limits_{i=1}^{n}(z_i(t)-z_{i0})(b(t)g_i(\textbf{x}_i(t))-c_i(z_i(t)))\\
 &\leq\frac{1}{n}b(t)\sum\limits_{i=1}^{n}(z_i(t)-z_i0)g_i(\textbf{x}_i(t))\\
 &=\frac{1}{n}b(t)\sum\limits_{i=1}^{n}(F_i(\textbf{x}_i(t),z_i(t))-F_i(\textbf{x}_i(t),z_{i0}))\\
 &=\frac{1}{n}b(t)\sum\limits_{i=1}^{n}((F_i(\textbf{x}_i(t),z_i(t))-F_i(\bar{\textbf{x}}(t),z_i(t)))+(F_i(\bar{\textbf{x}}(t),z_i(t))-F_i(\bar{\textbf{x}}(t),z_{i0}))).\\
 \end{split}\]
{ Note that for any element $z_i\in\Omega_i$, it holds $|F_i(\textbf{x}_i(t),z_i)-F_i(\bar{\textbf{x}}(t),z_i)|\leq\bar{K}\|\textbf{x}_i(t)-\bar{\textbf{x}}(t)\|$. Then}
\begin{equation}\label{eq28}\begin{split}
 \dot{d}_2(\bar{\textbf{x}}(t))\leq &\frac{2\bar{K}}{n}b(t)\sum\limits_{i=1}^{n}\|\textbf{x}_i(t)-\bar{\textbf{x}}(t)\|+\frac{1}{n}b(t) (F(\bar{\textbf{x}}(t),\textbf{z}(t))-F(\bar{\textbf{x}}(t),\textbf{z}_{0})).\\
 \end{split}\end{equation}
{Together with (\ref{eq27}), we have}
\[\begin{split}
\dot{d}(\bar{\textbf{x}}(t), \textbf{z}(t))\leq&\frac{4\bar{K}}{n}b(t)\sum\limits_{i=1}^{n} \left\|\textbf{x}_i(t)- \bar{\textbf{x}}(t)\right\|-\frac{1}{n}b(t)(F(\bar{\textbf{x}}(t),\textbf{z}_{0})-F(\textbf{x}_{0}, \textbf{z}(t))).
 \end{split}\]
{Similar to the proof of Theorem 1, it can be concluded that}
\[\begin{split}
&\int_{0}^{\infty}b(t)(F(\bar{\textbf{x}}(t),\textbf{z}_{0})-F(\textbf{x}_{0}, \textbf{z}_0))dt+ \int_{0}^{\infty}b(t)(F(\textbf{x}_{0}, \textbf{z}_0)-F(\textbf{x}_{0},\textbf{z}(t)))dt<\infty.
\end{split}\]
{Since $F(\bar{\textbf{x}}(t),\textbf{z}_{0})-F(\textbf{x}_{0}, \textbf{z}_0)$ and $F(\textbf{x}_{0}, \textbf{z}_0)-F(\textbf{x}_{0},\textbf{z}(t))$ are both non-negative, it holds that $\int_{0}^{\infty}b(t)(F(\bar{\textbf{x}}(t),\textbf{z}_{0})-F(\textbf{x}_{0}, \textbf{z}_0))dt<\infty$ and $\int_{0}^{\infty}b(t)(F(\textbf{x}_{0}, \textbf{z}_0)-F(\textbf{x}_{0},\textbf{z}(t)))dt<\infty$. Together with the fact that $\int_{0}^\infty b(t)dt\rightarrow\infty$, we can conclude that $\lim\limits_{t\rightarrow\infty}(F(\bar{\textbf{x}}(t),\textbf{z}_{0})-F(\textbf{x}_{0}, \textbf{z}_0))=0$ and $\lim\limits_{t\rightarrow\infty}(F(\textbf{x}_{0}, \textbf{z}_0)-F(\textbf{x}_{0},\textbf{z}(t)))=0$. This implies that there exists a vector $\textbf{x}^*\in\textbf{X}^*$ such that $\lim\limits_{t\rightarrow\infty}\textbf{x}_i(t)=\textbf{x}^*$ for any $i\in\mathcal{V}$. }\end{proof}

\section{Simulations}\label{se4}

In this section, we give numerical examples to illustrate the obtained results.

\begin{example}Consider nine agents with the index set $\{1,\cdots,9\}$. The agents communicate with each other via a time-varying directed graph, which periodically switches between two subgraphs depicted in Fig.\ref{fig1} with period $T=0.3$, and the weight of each edge is set to be 1. Algorithm (\ref{eq4}) is used for searching a feasible solution to inequalities $g_i(\textbf{x})=\sum\limits_{j=1}^{3} c_{ij}x_{j}+d_{j}$, $i=1,\cdots,9$, where $c_{11}=2, c_{12}=3, c_{13}=4; c_{21}=2, c_{22}=-3, c_{23}=-4; c_{31}=-2, c_{32}=1, c_{33}=0.5; c_{41}=2, c_{42}=-1, c_{43}=6; c_{51}=1, c_{52}=0, c_{53}=2; c_{61}=1, c_{62}=-2, c_{63}=0.3; c_{71}=0.5, c_{72}=2, c_{73}=1; c_{81}=-1, c_{82}=-1, c_{83}=0.5; c_{91}=-2, c_{92}=3, c_{93}=3$ and $d_1=-0.1;d_2=-3;\textcolor{blue}{d_3=-1};d_4=2;d_5=1;d_6=-1;d_7=-2;d_8=0.1;d_9=1$. Since $g_i(\textbf{x})$ is linear, the inequalities with three variables are convex. We denote $R_{i}(t)=\|\textbf{x}_{i}(t)-\frac{1}{9}\sum\limits_{j=1}^{9}\textbf{x}_{j}(t)\|,~~ i\in\mathcal{V}$ and $Q(t)=\sum\limits_{j=1}^{9}g_i^+(\textbf{x}_i(t))$. Let each agent's initial state equal to the same vector $[1, -0.5, 1]^T$ and $b(t)=\frac{0.9}{t+5}$, the trajectories of $R_i(t)$ and $Q(t)$ are shown in Fig. \ref{fig2} and Fig. \ref{fig3}, respectively. Fig. \ref{fig2} indicates that $\textbf{x}_i(t)$ converges to a common point $\textbf{x}^*$ for any $i\in\mathcal{V}$ as $t\rightarrow\infty$. It is computed that $\textbf{x}^*=[0.13, -0.15, -0.57]^T$.  Fig. \ref{fig3} shows that $\textbf{g}(\textbf{x}^*)\leq 0$. These observations are consistent with the results established in Theorem 1.
\end{example}

\begin{example}{Consider five agents with the index set $\{1,\cdots,5\}$. The communication graph is shown in Fig. \ref{fig4} and the weight of each edge equals 1. Algorithm (\ref{eq26}) is used for solving optimization problem (\ref{eq25}) with $\textbf{x}\in \mathbb{R}$, where the local cost functions are given as follows:}
\[
f_i(\textbf{x})=\left\{ {\begin{array}{*{20}c}\begin{split}
   &{0.5\omega_i\textbf{x}^2,~~~-100\leq\textbf{x}\leq 100}  \\
   &{-100\omega_i\textbf{x}+1.5\times10^4\omega_i,~~~\textbf{x}< -100 }  \\
   &{100\omega_i\textbf{x}-0.5\times10^4\omega_i,~~~\textbf{x}>100 }  \\
\end{split}\end{array}} \right.\begin{array}{*{20}c}
   {} \\
\end{array}i=1, \cdots, 5
\]
{where $\omega_1=0.5, \omega_2=0.3, \omega_3=0.4, \omega_4=0.6, \omega_5=0.2$. Note that $|\nabla f_i(\textbf{x})|\leq100\omega_i$. Given inequality constraints $g_1(\textbf{x})=2\textbf{x}-8, g_2(\textbf{x})=-\textbf{x}+2,g_3(\textbf{x})=\textbf{x}-4.5,g_4(\textbf{x})=3\textbf{x}-15$ and $g_5(\textbf{x})=-\textbf{x}+1$, it can be easily verified that the optimal solution is $\textbf{x}^*=2$. Let $\bar{z}_i=50$ for any $i = 1, \cdots ,5$, $b(t)=\frac{2.6}{2t+0.25}$, and the initial states of agents be $\textbf{x}_1(0)=3, \textbf{x}_2(0)=-2, \textbf{x}_3(0)=-1, \textbf{x}_4(0)=1$ and $\textbf{x}_5(0)=3$,  Fig. \ref{fig5} shows that the states of all the agents converge to the same optimal solution $\textbf{x}^*=2$. This is consistent with the result established in Theorem 2.}
\end{example}
\begin{figure}[htbp]
\centering
\includegraphics[width=0.6\textwidth]{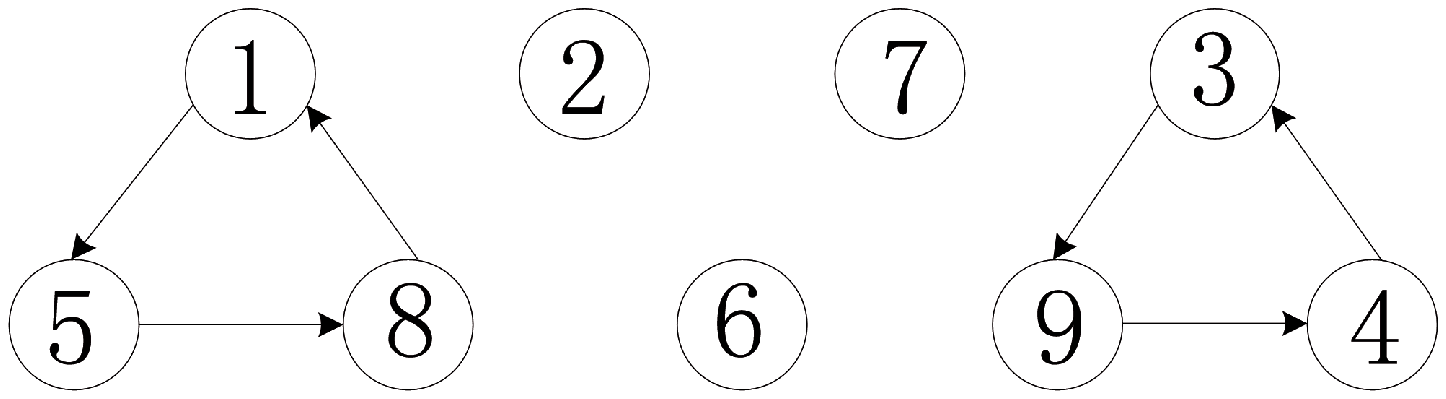}
~~~~~\includegraphics[width=0.6\textwidth]{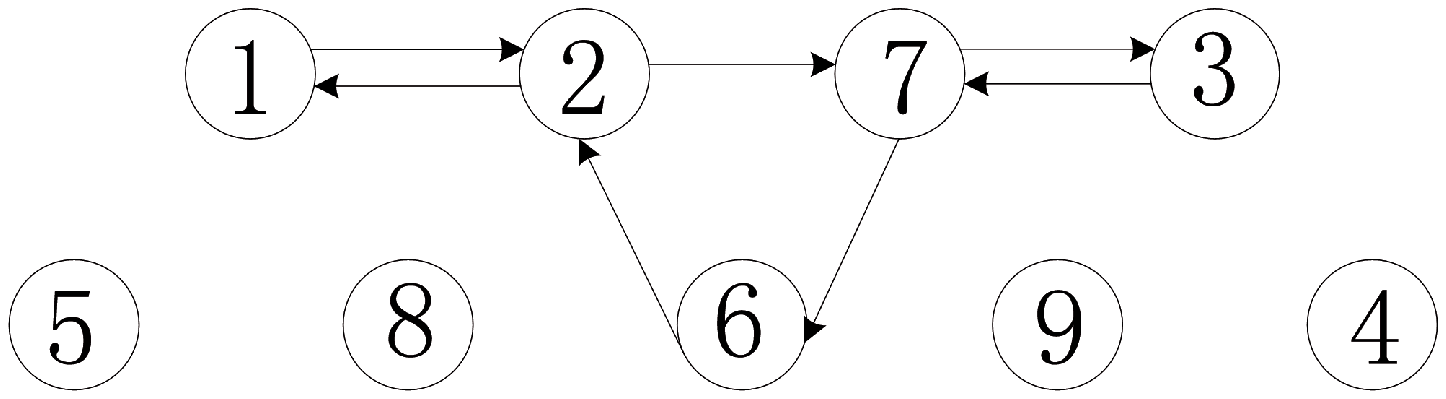}
\caption{The top subfigure represents the communication topology at time interval $t\in [0.3k,0.3k+0.15)$, where $k\in\mathbb{N}$. The bottom one represents the communication topology at time interval $t\in [0.3k+0.15,$ $0.3k+0.3)$.}\label{fig1}
\end{figure}

\begin{figure}
\begin{minipage}[t]{0.5\linewidth}
\centering
\includegraphics[width=0.8\textwidth]{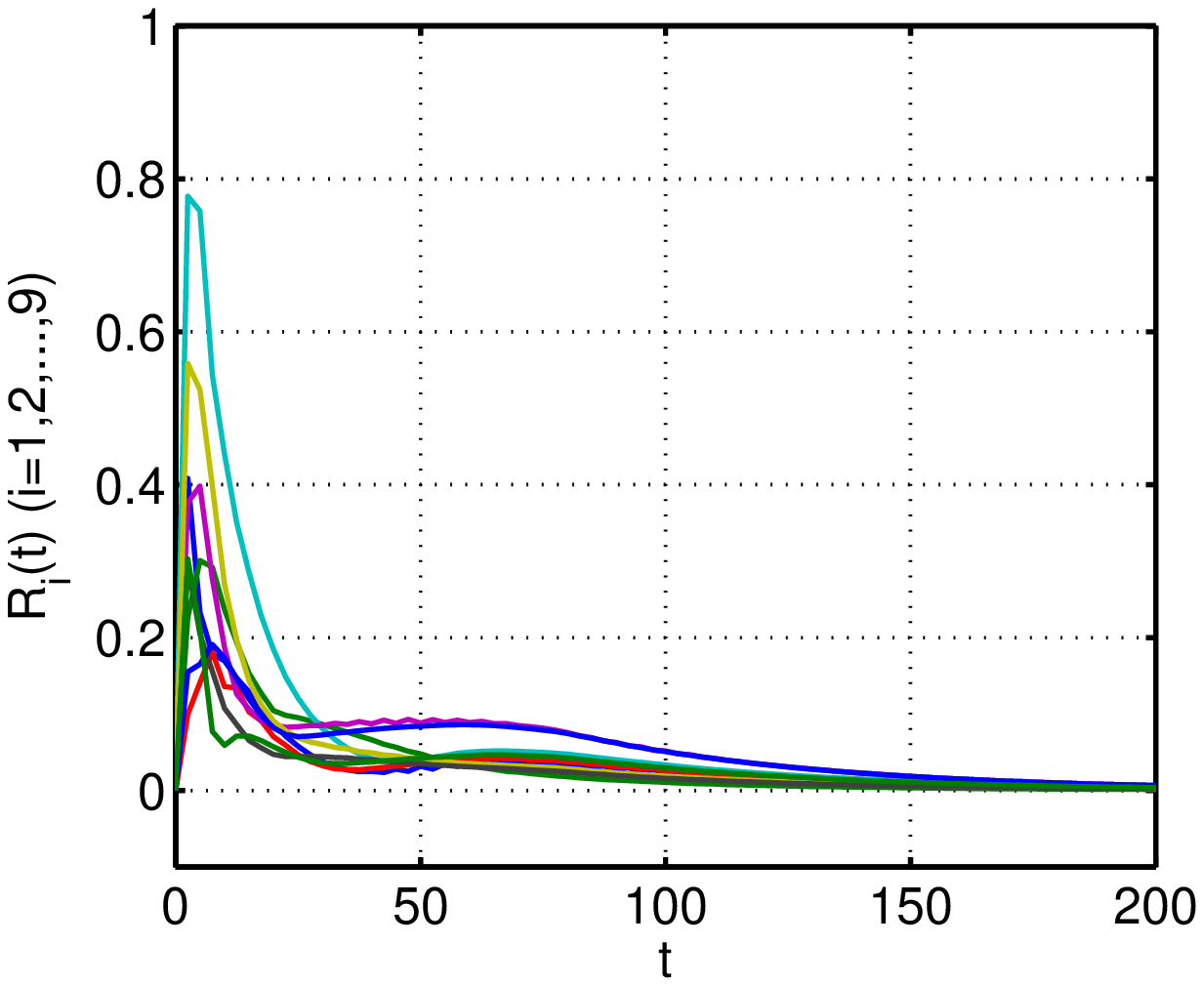}
\caption{The trajectories of $R_{i}(t)$ when algorithm (\ref{eq4}) is implemented, $i\in\mathcal{V}$.}\label{fig2}
\end{minipage}
\begin{minipage}[t]{0.5\linewidth}
\centering
\includegraphics[width=0.8\textwidth]{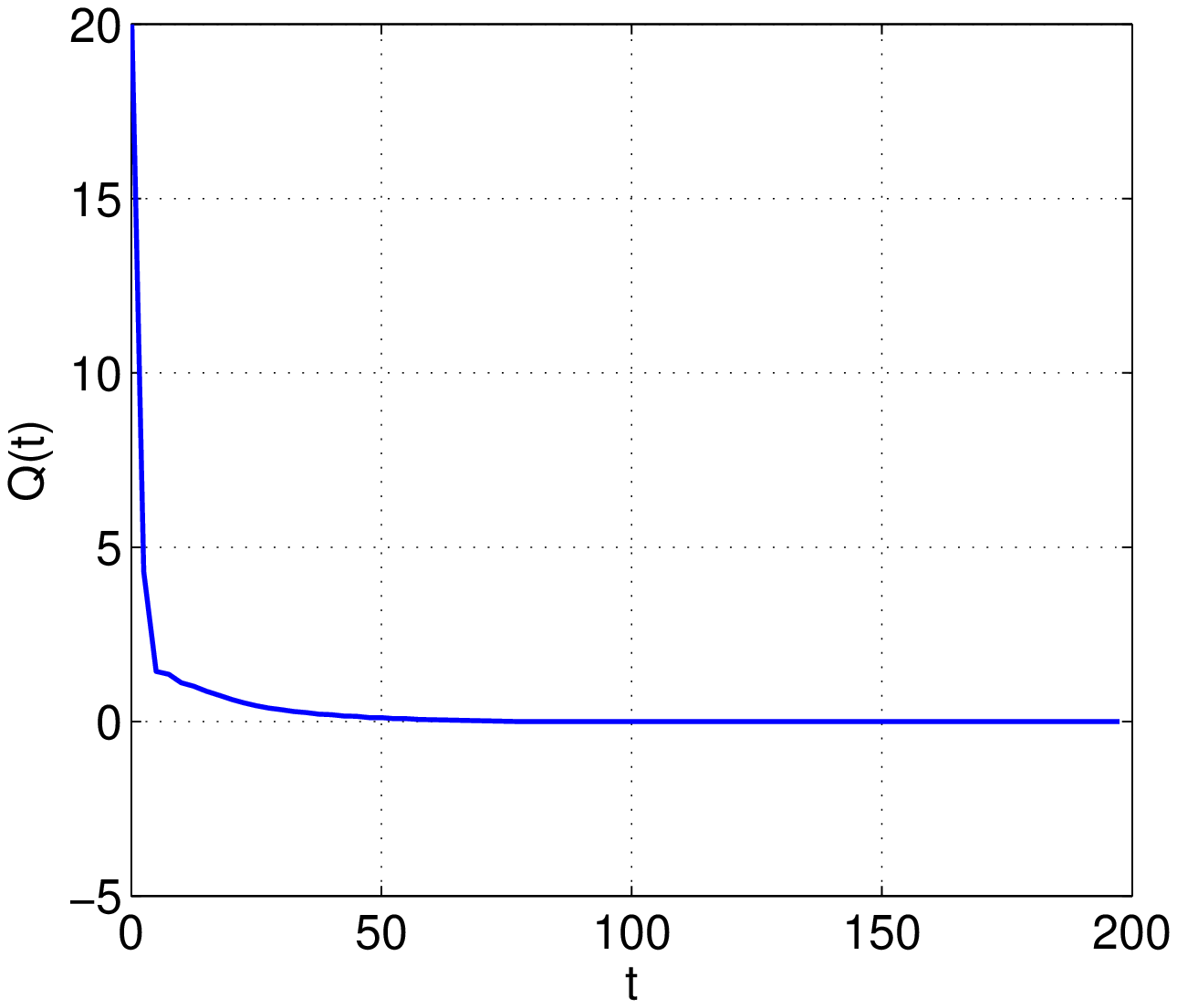}
\caption{The trajectories of $Q(t)$ when algorithm (\ref{eq4}) is implemented.}\label{fig3}
\end{minipage}
\end{figure}
\begin{figure}[htbp]
\centering
\includegraphics[width=0.42\textwidth]{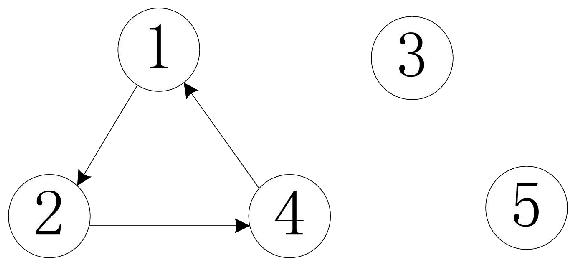}
~~~~~\includegraphics[width=0.42\textwidth]{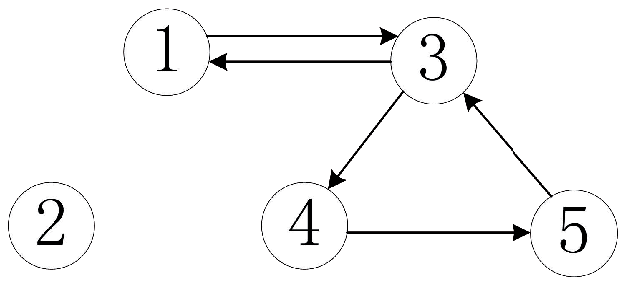}
\caption{The left subfigure represents the communication topology at time interval $t\in [k,k+0.5)$, where $k\in\mathbb{N}$. The right one represents the communication topology at time interval $t\in [k+0.5,$ $k+1)$.}\label{fig4}
\end{figure}

\begin{figure}
\centering
\includegraphics[width=0.5\textwidth]{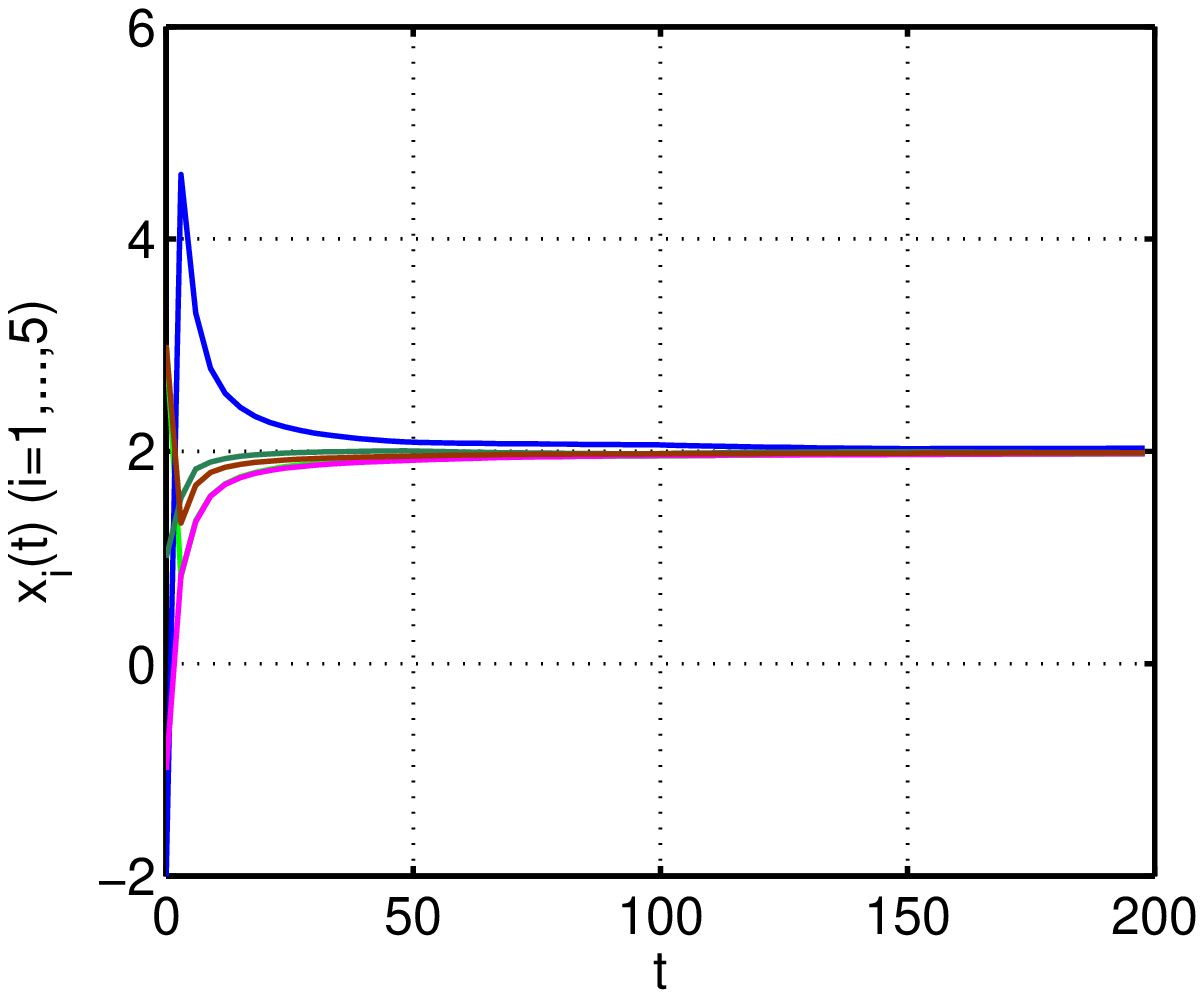}
\caption{The trajectories of $\textbf{x}_i(t)$ when algorithm (\ref{eq26}) is implemented.} \label{fig5}
\end{figure}

\section{Conclusion}\label{se5}
In this note, we have presented a continuous-time distributed computation model to search a feasible solution to convex inequalities. In this model, each agent adjusts its state value based on local information received from its immediate neighbors and its own inequality information using a subgradient method. It is shown that if the $\delta-$graph, induced by a time-varying directed graph, is strongly connected, the multi-agent system will reach a common state asymptotically and the consensus state is a feasible solution to convex inequalities. The method has been effectively extended to solving the distributed optimization problem of minimizing the sum of local objective functions subject to convex inequalities. Simulation examples have been conducted to demonstrate the effectiveness of our results. Our future work will focus on some other interesting topics, such as the case with time delays, packet loss and communication bandwidth constraints, which will bring new challenges in searching feasible solutions to inequalities over a network of agents.



%



\ifCLASSOPTIONcaptionsoff
  \newpage
\fi



%

\end{document}